\documentclass[12pt, letterpaper]{article} %
\usepackage[english]{babel}

\usepackage{geometry}
\geometry{letterpaper,left=1in,right=1in,top=1in,bottom=1in} %
\usepackage{sectsty}
  \allsectionsfont{\centering\mdseries\scshape}

\usepackage[T1]{fontenc}
\usepackage[utf8]{inputenc}
\usepackage{times}%
\usepackage{xcolor}
  \definecolor{navyblue}{rgb}{0.0, 0.0, 0.5}
  \definecolor{dknavy}{rgb}{0.0, 0.0, 0.33}
\usepackage{setspace}
  \onehalfspacing %
\usepackage[hyphens]{url}
 \urlstyle{same}
\usepackage{comment}

\usepackage[footnote]{acronym}
  \interfootnotelinepenalty=10000

\usepackage{float}
\usepackage{graphicx}
\usepackage{placeins}
\usepackage{booktabs}
\usepackage{threeparttablex}
\usepackage[format=hang, margin=2pt, aboveskip=0pt,
            parskip=3pt, compatibility=false]{caption}
\usepackage[labelformat=simple]{subcaption}

\setlength{\abovecaptionskip}{2pt minus 2pt}%
\pdfminorversion=6
\usepackage{comment}
\usepackage{enumerate}

\usepackage{amsmath}
\usepackage{amssymb}
\usepackage{amsthm}
  \theoremstyle{definition}
  \newtheorem{definition}{Definition}[section]
  
  \newtheorem{remarks}[definition]{Remarks}

  \theoremstyle{plain}
  \newtheorem{proposition}[definition]{Proposition}
  \newtheorem{lemma}[definition]{Lemma}
  \newtheorem{theorem}[definition]{Theorem}
  \newtheorem{corollary}[definition]{Corollary}
  \newtheorem*{main_theorem}{Main Theorem}
\allowdisplaybreaks

\newcommand{\bbC}{\mathbb{C}}
\newcommand{\bbN}{\mathbb{N}}
\newcommand{\bbR}{\mathbb{R}}
\newcommand{\bbP}{\mathbb{P}}

\newcommand{\norm}[1]{\left\lVert #1 \right\rVert}
\newcommand{\modulus}[1]{\left\lvert #1 \right\rvert}
\newcommand{\normiii}[1]{{\left\vert\kern-0.25ex\left\vert\kern-0.25ex\left\vert #1 
    \right\vert\kern-0.25ex\right\vert\kern-0.25ex\right\vert}}
\newcommand{\argument}{\,\cdot\,}

\newcommand{\tp}{{\operatorname{T}}}
\DeclareMathOperator{\cexp}{\operatorname{cexp}} %
\DeclareMathOperator{\clog}{\operatorname{cln}} %
\newcommand{\dx}{\mathrm{d}} %
\newcommand{\dxInt}{\;\mathrm{d}} %
\newcommand{\metric}{\operatorname{d}}
\newcommand{\id}{\operatorname{id}}

\usepackage[authoryear,longnamesfirst]{natbib}
  \bibliographystyle{apa}

\usepackage[titletoc]{appendix}

\usepackage{titletoc}

\usepackage{tikz-cd}
\usetikzlibrary{cd, er, positioning, calc, arrows, shapes}
\usepackage{nomencl}
  \makenomenclature
\usepackage[pdfborder={0 0 0}, colorlinks=true,
            linkcolor=navyblue, allcolors=navyblue]{hyperref}

\usepackage{epigraph}

\setlength\epigraphwidth{.9\textwidth}
\setlength\epigraphrule{0pt}

\begin{document}
\title{Single and Attractive: Uniqueness and Stability of Economic Equilibria under Monotonicity Assumptions\thanks{Contact: \href{mailto:patrizio.bifulco@uni-wuppertal.de}{\protect\nolinkurl{patrizio.bifulco@uni-wuppertal.de}}, 
\href{mailto:glueck@uni-wuppertal.de}{\protect\nolinkurl{glueck@uni-wuppertal.de}}, 
\href{mailto:okrebs@ethz.ch}{\protect\nolinkurl{okrebs@ethz.ch}}, \href{mailto:bohdan.kukharskyy@baruch.cuny.edu}{\protect\nolinkurl{bohdan.kukharskyy@baruch.cuny.edu}}.
\\ We thank Peter Egger, Benjamin Jung, Wilhelm Kohler, Vincent Lohmann, Davide Suverato for helpful comments and suggestions. Kukharskyy thanks the Alfred P. Sloan Foundation for financial support provided by the CUNY JFRASE. All  errors are our own.
  } 
}

\author{$\quad$Patrizio Bifulco
			\\
	      \textit{$\quad$University of Wuppertal} \and
	    Jochen Gl\"uck$\quad$
			\\
\textit{University of Wuppertal$\quad$} \and \hspace{0.9cm}$\quad\:\:$Oliver Krebs \\ \hspace{0.9cm}\textit{$\quad\:\:$ETH Zürich} \and $\quad\:\:$Bohdan Kukharskyy \\ \textit{$\quad\:\:$City University of New York}}

\date{\today}

\maketitle

\begin{center}
\vspace{-1cm}
\emph{Current version can be found \href{https://drive.google.com/file/d/1qoh-_TJ0tUIOLpJcUrBTsaRB4i85aYmD/view?usp=sharing}{\underline{here}}}
\end{center}

\begin{abstract}

\noindent
This paper characterizes equilibrium properties of a broad class of economic models that allow multiple heterogeneous agents to interact in heterogeneous manners across several markets. Our key contribution is a new theorem providing sufficient conditions for uniqueness and stability of equilibria in this class of models. To illustrate the applicability of our theorem, we characterize the general equilibrium properties of two commonly used quantitative trade models. Specifically, our analysis provides a first proof of uniqueness and stability of the equilibrium in multi-country trade models featuring (i) multiple sectors, or (ii) heterogeneity across countries in terms of their labor cost shares. 
These examples also provide a practical toolkit for future research on how our theorem can be applied to establish uniqueness and stability of equilibria in a broad set of economic models. 

\bigskip
\bigskip
\bigskip
\bigskip

\noindent
\textit{JEL codes:} D50, C62, F11, R13
\\
\textit{Keywords:} Uniqueness, stability, general equilibrium, quantitative trade, networks 
\end{abstract}

\thispagestyle{empty}
\setcounter{page}{0}
\clearpage

\epigraph{\itshape Multiple equilibria are not necessarily useless but, from the standpoint of any exact science, the existence of `uniquely determined equilibrium [...]' is, of course of the utmost importance [...]; without any possibility of proving the existence of uniquely determined equilibrium---or at all events, of a small number of possible equilibria---at however high a level of abstraction, a field of phenomena is really a chaos that is not under analytic control.}{\citep[][p.~969]{Schumpeter}}

\section{Introduction\label{intro}}

Every day economists around the world are in search for answers to pressing policy-relevant questions: Should global value chains be repatriated to avoid future supply chain disruptions, as experienced during the Covid-19 pandemic? How do natural disasters, such as Fukushima or Hurricane Katrina, propagate through multinational firm networks and influence our livelihoods? What is the optimal monetary policy in a %
complex world with multiple sectors and production networks? 
The quest to answer these and many other important questions 
has led to the emergence of an entire class of quantitative general equilibrium (GE) models which aim to put numbers on various shock scenarios and public policies 
through simulations.%
    \footnote{This family of models spans a wide range of economic fields, including international trade, spatial economics, macroeconomics and (production) networks; for an overview of these models, see handbook chapters and review articles by \cite{CostinotRodriguezClare2014}, \cite{AntrasChor2021}, \cite{ReddingRossiHansberg2017}, \cite{Redding2021}, and \cite{CarvalhoTS2019}. In this paper, we refer to this large class of frameworks with an overarching term `quantitative GE models'.}
Yet, a key aspect of  economic modeling has arguably taken a back seat in the quantitative GE literature---the question of uniqueness and stability of equilibria.
Simply put, if a simulation of 
a shock in
a quantitative model 
churns out a given number, what ensures that there is only a single, unique  numerical solution? And is it a stable, attractive equilibrium? In this paper, we
tackle these fundamental questions
by providing a tool to characterize the equilibrium properties of a broad class of cutting-edge quantitative GE models in terms of their uniqueness and stability.%
\footnote{
    Throughout the paper, we use the notion \emph{stability} in the following precise mathematical sense, which encompasses two
    properties.
    We say that the equilibria of the equation $x = F(x)$ are \emph{stable} if
    (i) they are \emph{attractive} in the sense that, for every initial value $x$, the iterates $F^n(x)$ converge to an equilibrium as $n \to \infty$, and additionally
    (ii) every equilibrium $x^*$ is \emph{Lyapunov stable}, which means that for every number $\varepsilon > 0$ there exists a number $\delta > 0$ such that, for all $x \in \bbR^N_{++}$ which satisfy $\norm{x - x^*} < \delta$, one has $\norm{F^n(x) - F^n(x^*)} < \varepsilon$ for all integers $n \ge 0$.
    Note that neither the attractivity nor the Lyapunov stability depends on the specific choice of the norm $\norm{\argument}$ on $\bbR^N$. In fact, both properties are purely topological in nature, i.e., they can be rephrased in terms of open subsets of $\bbR^N$.
}

Specifically, our main contribution is a novel mathematical theorem which provides sufficient conditions for up-to-scale  
uniqueness and global attractivity of the set of fixed point solutions to general equation system in the form $x = F(x)$ with $x \in \bbR^{N}_{++}$ and a function $F: \bbR^N_{++} \to \bbR^N_{++}$, in which the partial reaction of $F(x)$ to changes in $x$ is monotonic.\footnote{Throughout the paper, the equilibrium is said to be unique \emph{up-to-scale} if it is unique up to a choice of numéraire.}
This monotonicity assumption, which we define and discuss in detail below, is natural to many economic models and translates %
to the \emph{partial} cross-derivatives of all equilibrium variables having \emph{constant signs} across their full domain. %
To establish uniqueness and stability using our theorem, one needs to verify a set of requirements on these signs of the Jacobian matrix that must be satisfied by a given GE model.
The major advantage of this approach lies in the fact that despite the Jacobian matrix becoming large in
quantitative GE models with a large number of variables, the signs of its entries are often easily determined, even without explicitly deriving all of its values.%
\footnote{Note that we refer to the Jacobian of $F(x)$ and not to that of the excess demand system $Z(x) \equiv F(x)-x$ on which sign conditions such as the gross-substitute property are often imposed in uniqueness proofs.}

The proof of our theorem relies on a combination of several mathematical methods: the (up-to-scale) uniqueness follows from (i) a characterization of the scaling property of $F$ in terms of eigenvalue properties of the elasticity matrices of the problem and (ii) an application of Perron--Frobenius theory of irreducible matrices which allows us to infer that certain eigenvalues of a matrix are simple.
Attractivity follows by further combining (i) and (ii) with (iii) the choice of an appropriate norm -- a so-called \emph{gauge norm} -- on the variable space which is intimately related to the eigenvectors of the elasticity matrices and gives Lipschitz continuity with constant $1$ for the right-hand side of the equation;
(iv) a quotient space argument which allows us to eliminate the scaling equivariance of the equation without explicitly solving for one of the variables and without fixing the value of any specific function of the variables (i.e.\ without picking a numéraire); 
and (v) a topological technique that is designed to extend, under the Lipschitz continuity established in~(iii), local asymptotic stability to global attractivity.

To illustrate the value added of our theorem, we apply it to a broad class of  quantitative GE models which is commonly used to study the effect of various shock scenarios (e.g., a productivity or trade shock) on different economic outcomes (e.g., countries' production or welfare). Two features of this model class have loomed prominently in numerical simulations: (i) differences across countries regarding their labor cost shares, and (ii) the fact that production takes place in different sectors. Yet, the proof of uniqueness and stability of equilibria in such settings has, to the best of our knowledge, been outstanding to date. We close this gap by setting up a generalized quantitative trade framework which nests as special cases these two real-world features and show how our theorem can be readily applied  to characterize equilibrium properties in these cases. Specifically, we first prove (up-to-scale) uniqueness and stability in a multi-country, one-sector model in which labor and intermediates are used in varying proportions across countries. In the second application, we establish the same properties for a multi-country model with only labor in production but multiple sectors with varying trade elasticities across sectors.

\paragraph*{Related literature.} 

Early economic general equilibrium analysis focused on proofs of existence and culminated in the development of the Arrow-Debreu model \citep[see][]{ArrowDebreu1954, McKenzie1959}. The development of similarly general conditions establishing uniqueness, however, turned out to be significantly more difficult.%
  \footnote{See \cite{Balasko2009} for a historical review of the classical Theory of General Economic Equilibrium as well as its modern developments.} 
A common additional requirement to show uniqueness is the well-known gross-substitution property \citep[see][]{Wald1936}. In pure exchange economies, showing that this property holds for the aggregate excess demand function immediately implies that there is at most one general equilibrium solution \citep[see,  e.g., ][Proposition 17.F.3]{Mas-CollelWinstonGreen1995}. 
 
For example, \cite{AlvarezLucas2007} rely on this property to prove uniqueness in a model similar to \cite{EatonKortum2002}. While this approach may be viable in relatively simple settings with few economic interactions, rewriting more complex models of production economies in terms of an exchange economy becomes excessively difficult and oftentimes unfeasible.\footnote{Another strand of the classical literature  has attempted to analyze uniqueness and multiplicity of economic equilibria based on the index theorem \citep{Dierker1972, Varian1975}. However, this line of work has proven difficult to adapt to more complex settings, i.e. the obtained conditions for uniqueness in production economies are extremely restricting \citep[see][]{Kehoe1985}.} Our approach instead avoids such complexities, by providing conditions that can be applied to the ``stacked'' vector of all variables (and the Jacobian of the respective equation system) without the need to reduce the system to a pure exchange economy first.

In the latter sense, our approach is 
 related to the recent contribution by \cite{AllenArkolakisLi2022}, who use a contraction mapping theorem to derive sufficient conditions for uniqueness of an equilibrium in a class of 
 somewhat more specific network models of the form $x_{ih} = \sum_{j=1}^N f_{ijh}\left(x_{j1},...,x_{jH}\right)$ with $x_{ih}\in\mathbb{R}_{++}^{N\times H}$ and $f_{ijh}:\mathbb{R}_{++}^H\rightarrow \mathbb{R}_{++}$. Intuitively, this system characterizes $N$ heterogeneous agents interacting in $H$ different ways. Instead of %
 reducing the system to a single type of interaction ($H=1$) as in \cite{AlvarezLucas2007},
they assume uniform bounds to the strength of interactions, abstracting thereby from important real-world heterogeneity. While this allows the authors to derive groundbreaking sufficient conditions for uniqueness, it also limits the applicability of their theorem. Specifically, the part of their theorem commonly applied to nominal GE systems of quantitative models (Theorem~1(ii)b), constricts all elasticities $\partial \ln f_{ijh} / \partial \ln x_{jk}$ and hence the strengths of interactions to be constant across heterogeneous agents. This condition is easily violated, for example, in international trade models when the effect of wages ($k$) on prices ($h$) differs across countries ($j$) due to different labor shares in production \citep[as in the seminal workhorse model by][]{CaliendoParro2015}. Similarly, multi-country multi-sector models along the lines of \cite{costinot_what_2012}, with trade elasticities varying by sector, are not covered by their theorem. 
Since our approach does not impose any restrictions on the strength of interactions and relies instead on the full Jacobian describing all interactions between all agent pairs,  we are able to show \emph{up-to-scale uniqueness} of equilibria in multi-country trade models with multiple sectors or with variable labor shares. Moreover, our novel theorem allows us to verify the \emph{stability} of the equilibria, a property which is not covered in part (ii)b of Theorem~1 in \cite{AllenArkolakisLi2022}.

\cite{AllenArkolakis2014} provide sufficient conditions for existence, uniqueness, and stability of a spatial economic equilibrium in continuous space, with a \emph{single sector}, carefully considering the role of external economies of scale. More generally, \cite{AllenArkolakisTakahashi2020} develop a `universal gravity' framework that encompasses a range of common one-sector models and show conditions on the aggregate demand and supply elasticities that ensure existence and uniqueness. Again, one can use our theorem to establish uniqueness in an expanded `universal gravity` framework that comprises, for example, variable labor shares or, more generally, demand and supply elasticities varying by country.

Our work is also related to \cite{Kucheryavyyetal2021a}, who characterize equilibrium properties of existence and uniqueness in a generalized version of the \emph{two-region} spatial economics model that nests the class of models in \cite{AllenArkolakis2014} and further allows economies of scale in a \emph{two-sector} setup (manufacturing and agriculture) as in \cite{Krugman1991}. 
In a \emph{multi-sector} model of international trade with sector-level economies of scale but without intermediates, \cite{Kucheryavyyetal2022} prove uniqueness of equilibrium for the special cases of \emph{frictionless trade} and \emph{two countries}, if the scale elasticity is lower than the inverse of the trade elasticity in every sector. 
In our applications, we show that with constant returns to scale (i.e. a scale elasticity of 0) the solution to the framework from \cite{Kucheryavyyetal2022} is up-to-scale unique for an arbitrary number of countries and with arbitrary trade costs, as long as the underlying graph of the trade network remains strongly connected.%

The remainder of this paper is structured as follows. Section \ref{sec:main-result} defines some terminology and provides our novel theorem. Section \ref{sec:applications} shows how to apply our theorem to quantitative trade models. The proof or our main theorem and necessary intermediate steps are presented in the appendix.

\bigskip

\section{Mathematical main result}
\label{sec:main-result}

Let $N \ge 1$ be an integer and $F: \bbR^N_{++} \to \bbR^N_{++}$ be a continuously differentiable function, where we use the notation $\bbR^N_{++} := \{x \in \bbR^N : \; x_j > 0 \text{ for all } j = 1,\dots,N\}$. Consider the fixed point equation 
\begin{align}
	\label{eq:fixed-point-economic}
	x^* = F(x^*).
\end{align}
We study up-to-scale uniqueness of its solutions, 
whether all solutions are Lyapunov stable, and whether, 
for any initial value $x \in \bbR^N_{++}$, the iterates $F^n(x)$ converge to a solution 
of~\eqref{eq:fixed-point-economic} as $n \to \infty$.%
To do so, we assume a number of mathematical properties of the function $F$. 
For the sake of easier reference, we give names to these properties in the following definition.

\begin{definition}
  \label{def:properties-of-rhs}
   Let $u \in \bbR^N$.
  \begin{enumerate}[\upshape (a)]
        
    \item \label{def:connectedness}
      We say that $F$ \emph{connects all variables} if the modulus of its Jacobian matrix, $\modulus{DF(x)}$, is irreducible for each $x \in \bbR^{N}_{++}$.
        
    \item \label{def:self_interaction}
      We say that $F$ \emph{exhibits self-interaction} if for each $x \in \bbR^{N}_{++}$ there exists an index $j \in \{1, \dots, N\}$ such that $\frac{\partial F_j(x)}{\partial x_j} \not= 0$.
        
    \item\label{def:scaling}
      We say that the function $F$ \emph{scales with exponent $u$} if
      \begin{align*}
          F(c^u x) = c^u F(x)
      \end{align*}
      for all $x \in \bbR^{N}_{++}$ and for all $c \in \bbR_{++}  = (0,\infty)$.%
      \footnote{Here we use the notation $c^u y = (c^{u_1} y_1, \dots, c^{u_N} y_N)$ for $y \in \bbR^N$.}
        
    \item \label{def:monotonicity}
      We say that \emph{the monotonicity behavior} of the function $F$ is \emph{consistent with $u$} if the set $\{1, \dots, N\}$ can be partitioned into two disjoint subsets $\zeta_+$ and $\zeta_-$ such that $u_j \ge 0$ for all $j \in \zeta_+$ and $u_j \le 0$ for all $j \in \zeta_-$ and such that the following property holds:
      For all $x \in \bbR^{N}_{++}$ and all indices $j,k \in \{1,\dots,N\}$:
      \begin{align*}
        \frac{\partial F_j(x)}{\partial x_k} & \ge 0 \quad \text{if both } j \text{ and } k \text{ are located in the same of the sets } \zeta_+,\zeta_-  \\ 
        \frac{\partial F_j(x)}{\partial x_k} & \le 0 \quad \text{if } j \text{ and } k \text{ are not located in the same of the sets } \zeta_+, \zeta_-.
      \end{align*}
    \end{enumerate}
\end{definition}

Note that property~(\ref{def:scaling}) in the definition is trivially satisfied if all entries of $u$ are equal to $0$;
so this condition is only non-trivial if at least one entry of $u$ is different from $0$.
We will show later (Corollary~\ref{cor:signs-vs-spectral-radius}) that if at least one entry of $u$ is different from $0$ and properties~(\ref{def:connectedness}), (\ref{def:scaling}), and~(\ref{def:monotonicity}) are satisfied, then it follows automatically that every entry of $u$ is different from $0$.

Also note that the fact that $F$ scales with some non-zero $u \in \bbR^{N}$ is equivalent to the existence of some $v \in \bbR^{N}_{++}$, 
at least one of whose components is distinct from $1$, 
such that $F(v^\mu x) = v^\mu F(x)$ for all $x \in \bbR^N_{++}$ and all $\mu \in \bbR$.\footnote{With $v^\mu y = (v_1^{\mu} y_1, \dots, v_N^{\mu} y_N)$ for $y \in \bbR^N$ in this case.}

We discuss the mathematical properties underlying these definitions as well as mathematically equivalent alternative specifications in the appendix. For now we limit ourselves to the following remarks.

\begin{remarks}
    \begin{enumerate}[(a)]
        \item 
        In many economics models, it can be expected that $F$ connects all variables since a change of any variable should have -- at least through several intermediaries -- a non-zero impact on the behavior of the other variables.
        
        \item 
        That $F$ exhibit self-interactions is a rather weak technical assumption implying that at least one variable directly influences itself which can often be expected to be satisfied in concrete models in economics.
       
        \item 
        In many economic models, we expect $F$ to scale with a non-zero exponent $u \in \bbR^N$ since the choice of numéraire should not change the behavior of the model.
        
        \item 
        That the monotonicity behavior of $F$ be consistent with $u$ is our main assumption from a mathematical point of view. 
        While the other assumptions can be naturally expected in many models since they directly reflect a certain economic or network property of the model, 
        the monotonicity assumption will turn out to guarantee from a mathematical point of view that one can control the local behavior of $F$ in a certain way 
        (see Subsection~\ref{subsection:monotonicity-condition} in the appendix for more details). Hence, while the other conditions mainly determine the classes of models that fit the setting of our theorem, the monotonicity condition restricts the applicability of the theorem for mathematical reasons.
    \end{enumerate}
\end{remarks}

Using the terminology introduced above we can now formulate our mathematical main result in the following theorem.

\begin{main_theorem}
    Let $u \in \bbR^N$ with at least one entry different from $0$ and assume that $F$ satisfies the properties~\eqref{def:connectedness}, \eqref{def:scaling} and \eqref{def:monotonicity} from Definition~\eqref{def:properties-of-rhs}.
    Assume moreover that the fixed point equation~\eqref{eq:fixed-point-economic} has a solution $x^* \in \bbR^N_{++}$.
    Then one has:
    \begin{enumerate}
        \item[(i)]
        \label{thm:main_theorem_uniqueness}
        \emph{Up-to-scale uniqueness:}
        The solutions of~\eqref{eq:fixed-point-economic} are precisely the vectors in $\bbR^N_{++}$ given by $c^u x^*$ for some $c \in \bbR_{++}$.%
    \end{enumerate}
    If, in addition, $F$  satisfies property~\eqref{def:self_interaction} from Definition~\ref{def:properties-of-rhs}, then one also has:
    \begin{enumerate}
        \item[(ii)]
        \label{thm:main_theorem_convergence}
        \emph{Lyapunov stability and attractivity:}
        Every solution of~\eqref{eq:fixed-point-economic} is Lyapunov stable,
        and for every $x \in \bbR^N_{++}$ the iterates $F^n(x)$ converge to one of the solutions of~\eqref{eq:fixed-point-economic} as $n \to \infty$.
    \end{enumerate}
\end{main_theorem}

\begin{proof}
    We prove the theorem in Appendix~\ref{sec:proof-of-main-result}.
\end{proof}

\section{Applications to trade models}
\label{sec:applications}

In this section, we develop a %
quantitative trade %
framework based on the seminal paper by \cite{EatonKortum2002} and \cite{CaliendoParro2015}. In subsections \ref{subsec:varying_labor_shares} and \ref{subsec:multisec_trade}, we turn to two special cases that are commonly used in the literature to simulate the effects of trade barriers or productivity changes on countries or regions. First, we consider a multi-country, one-sector model in which labor and intermediates are used in varying proportions across different countries. The second application is a multi-country \emph{and} multi-sector model in which labor is the only factor of production. In both cases, we show that the state-of-the art approaches cannot be used to establish uniqueness of the equilibrium. Yet, our main theorem can be readily used to infer both uniqueness and convergence.

\subsection{A general framework}
\label{subsec:general_trade_model}

\paragraph{Setup and Demand}
Each of $J$ countries, indexed by $i,j$, is endowed with $L_i$ worker-consumers that are perfectly mobile across $S$ sectors, indexed by $s$, and are each inelastically supplying one unit of labor. These consumers have Cobb-Douglas preferences across sector bundles that are constant elasticity of substitution (CES) aggregates of an infinite mass of varieties (normalized to 1) produced in each sector. Consumer welfare $U_i$ is given by

\begin{equation}
U_{i} = \sum_{s=1}^S \left(
    \frac{Q_{is}}{\alpha_{is}}
  \right)^{\alpha_{is}}
  \qquad\textrm{with}\quad
Q_{is} = 
    \left(
      \int_0^1
        \left(
          q_{is} \left( \nu \right)
        \right)^{\frac{\sigma_s-1}{\sigma_s}} 
      \mathrm{d}\nu
    \right)^{\frac{\sigma_s}{\sigma_s-1}}\ , \label{eq_CES}
\end{equation}
where $0 \leq \alpha_{is} \leq 1$ is the expenditure share of country $i$'s consumers on the sector $s$ bundle $Q_{is}$ (with $\sum_{s=1}^S \alpha_{is} = 1$), $\nu$ indexes specific varieties, $q_{is}\left(\nu\right)$ denotes consumption of a variety in country $i$ and $\sigma_s>1$ is the elasticity of substitution between varieties in sector $s$. 

\paragraph{Production}
In each country $i$ and sector $s$, perfectly competitive producers can produce all varieties with constant returns to scale and marginal costs $c_{is}$. Implied pricing at marginal costs gives the mill price $p_{is}\left(\nu\right)$ of variety $\nu$ produced in location $i$ and sector $s$ as
\[
  p_{is}\left(\nu\right) =
  \frac{c_{is}}
  {z_{is}\left(\nu\right)}\ ,
\]
where $z_{is}\left(\nu\right)$ denotes the total factor productivity of the respective variety in country $i$ and sector $s$. These productivities are drawn by each country from a country- and sector-specific Fréchet distribution given by the cumulative distribution function (CDF) 
\[
  G_{is}\left(z_{is}\left(\nu\right) \leq z\right) =
  e^{-A_{is} z^{-\theta_s}} \ ,
\]
where $A_{is} > 0$ controls the average of productivities in country $i$'s sector $s$ and $\theta_s > \sigma_s - 1>0$ their spread.

\paragraph{Trade and Price Indices}
We  assume that varieties can be traded worldwide subject to an `iceberg' type transport cost. This type of transport cost is commonly used in trade models and assumes that $\tau_{ijs}>1$ units have to be shipped from country $i$ and sector $s$ in order for one unit to arrive in country $j$. We permit trade costs to become infinite for some $i\neq j$ pairs (implying 0 trade for the respective country-pair and sector), as long as no completely separate trading blocks emerge, and the trade network thus remains 'connected' in the sense of Definition~\ref{def:properties-of-rhs}(\ref{def:connectedness}).

As varieties are not differentiated by origin, consumers will source each variety from the cheapest source after transport costs and we can follow \cite{EatonKortum2002} to derive the price index $P_{is}$ for the sector $s$ consumption bundles $Q_{is}$ in location $i$ from the Fréchet distribution and utility function as

\begin{equation}
    P_{is} = 
    \Gamma
      \left(
        \frac{\theta_s + 1 - \sigma_s}{\theta_s}     
      \right)^{\frac{1}{1-\sigma_s}}
    \left(
      \sum_{j=1}^J 
        A_{js} 
        \left(
          c_{js}
          \tau_{jis}
        \right)^{-\theta_s}
    \right)^{-\frac{1}{\theta_s}} \ ,
    \label{eq_price_index}
\end{equation}
where $\Gamma\left(\cdot\right)$ denotes the gamma function.

Similarly, the share $\pi_{ijs}$ of country $j$'s expenditure on sector $s$ that falls on varieties produced in country $i$ becomes

\begin{equation}
    \pi_{ijs} =
    \frac{A_i 
        \left(
          c_{is}
          \tau_{ijs}
        \right)^{-\theta_s}}
         {\sum_{k=1}^J A_k 
        \left(
          c_{ks}
          \tau_{kjs}
        \right)^{-\theta_s}}.
        \label{eq_import_shares}
\end{equation}

\paragraph{Goods Market Clearing}
In the general equilibrium of our simple trade model, goods markets must clear in each sector. This implies that worldwide expenditure on varieties produced in country $i$ and sector $s$ must be equal to the respective output value. Denoting country $j$'s expenditure on sector $s$ goods by $E_{js}$ we can write 
\begin{equation}
  R_{is} = \sum_{j=1}^J \pi_{ijs} E_{js} \ .
  \label{eq_market_clearing}
\end{equation}

\paragraph{Factor Market Clearing}
In our simple setup, we assume that production takes place by combining labor and sectoral aggregates from potentially all sectors in a Cobb-Douglas fashion. Hence, we have 
\begin{equation}
   \label{eq_marginal_costs}
    c_{is} = w^{\gamma_{is}}_i \prod_{r=1}^S P_{ir}^{\gamma_{irs}},
\end{equation}
where $w_i$ denotes the wage in country $i$, $0\leq\gamma_{is}\leq1$ the cost share of labor in country $i$ and sector $s$ and $0\leq\gamma_{irs}\leq1$ the cost share of sector $r$ intermediates in sector $s$. In each sector $s$ and country $i$, all cost shares must some to one, such that $\gamma_{is} + \sum_{r=1}^S \gamma_{irs} = 1$. The labor market clearing wage can then simply be derived from the total wage sum of all sectors, as
\begin{equation}
    w_i = \frac{\sum_{s=1}^S \gamma_{is} R_{is}}{L_i}\ .
    \label{eq_labor_clearing}
\end{equation}

\paragraph{Balanced Trade}
Finally, we assume that total consumer expenditure in each country equals their income. Under this assumption the total  expenditure (for both final consumption and intermediate usage) on sector $s$ goods in location $i$ is given by
\begin{equation}
   \label{eq_balanced_trade}
    E_{is} = \alpha_{is} w_i L_i + \sum_{r=1}^S \gamma_{isr} R_{ir} \ .
\end{equation}

\paragraph{Equilibrium}
To write down the equilibrium of our trade model, we define so called multilateral resistance terms
\begin{equation}
\label{eq_multilateral_resistance}
\bbP_{is} \equiv P_{is}^{-\theta_s}\qquad \textrm{and}\qquad \Omega_{is} \equiv R_{is} c_{is}^{\theta_s} \ .
\end{equation}
Using these definitions and import shares \eqref{eq_import_shares}, we can rewrite the sectoral goods market clearing equations \eqref{eq_market_clearing}, price index equation \eqref{eq_price_index} and labor market clearing \eqref{eq_labor_clearing} as

\begin{align}
\label{eq_equilibrium_1}
\Omega_{is}&=\sum_{j=1}^{J}\Gamma\left(\frac{\theta_{s}+1-\sigma_{s}}{\theta_{s}}\right)^{-\frac{\theta_{s}}{1-\sigma_{s}}}A_{i}\tau_{ijs}^{-\theta_{s}}\mathbb{P}_{js}^{-1}E_{js}
\\
\label{eq_equilibrium_2}
\mathbb{P}_{is}&=\sum_{j=1}^{J}\Gamma\left(\frac{\theta_{s}+1-\sigma_{s}}{\theta_{s}}\right)^{-\frac{\theta_{s}}{1-\sigma_{s}}}A_{j}\tau_{jis}^{-\theta_{s}}\Omega_{js}^{-1}R_{js} 
\\
\label{eq_equilibrium_3}
w_i &= \sum_{s=1}^S \frac{\gamma_{is}}{L_i} R_{is}\ .
\end{align}

Together with balanced trade \eqref{eq_balanced_trade}, the unit input bundle costs \eqref{eq_marginal_costs} and the definition of multilateral resistance terms \eqref{eq_multilateral_resistance} these equations represent the general equilibrium of this seminal trade model. In the next two subsections, we discuss how our theorem can be applied to establish uniqueness in two commonly used special cases of this equilibrium setup.

\subsection{One sector with varying labor shares}
\label{subsec:varying_labor_shares}
We first turn to a world in which there is only one sector ($S=1$) producing heterogeneous varieties. Given this simplification, we drop the index $s$ where it is not useful in the following. Moreover, with just one sector all $\alpha_{is} = 1$ and all intermediate cost shares ($\gamma_{irs}$) of production are just one minus the labor share, i.e. $1 - \gamma_i$. Consequently, the balanced trade condition \eqref{eq_balanced_trade} collapses to $E_i = R_i$ for each country $i$.  

Moreover, plugging  unit input bundle costs \eqref{eq_marginal_costs} and wages from the labor market clearing condition \eqref{eq_labor_clearing} into $\Omega_i$ from \eqref{eq_multilateral_resistance}, and using therein the definition of $\bbP_{i}$ from \eqref{eq_multilateral_resistance}, yields after rearranging
\[
  R_i = 
  \left(
    \frac{\gamma_i}{L_i}
  \right)^{-\frac{\theta\gamma_i}{1+\theta\gamma_i}}
  \Omega_{i}^{\frac{1}{1+\theta\gamma_{i}}}
  \mathbb{P}_{i}^{\frac{1-\gamma_{i}}{1+\theta\gamma_{i}}} \ .
\]

Using this result in the remaining two equilibrium equations \eqref{eq_equilibrium_1} and \eqref{eq_equilibrium_2} implies that the general equilibrium of this version of the trade model consists in sets of multilateral resistance terms $\bbP_i$ and $\Omega_i$ that solve the following system of equations:

\begin{align}
\label{eq:one_sec_eq_1}
  \Omega_i &= 
    \sum_{j=1}^J
      \Gamma
      \left(
        \frac{\theta + 1 - \sigma}{\theta}     
      \right)^{-\frac{\theta}{1-\sigma}}
      A_i \tau_{ij}^{-\theta}
      \left(
        \frac{\gamma_j}{L_j}
       \right)^{-\frac{\theta\gamma_j}{1+\theta\gamma_j}}
      \Omega_{j}^{\frac{1}{1+\theta\gamma_{j}}}
  \mathbb{P}_{j}^{\frac{1-\gamma_{j}}{1+\theta\gamma_{j}} - 1} 
\\
\label{eq:one_sec_eq_2}
   \mathbb{P}_i &=
    \sum_{j=1}^J
    \Gamma
      \left(
        \frac{\theta + 1 - \sigma}{\theta}     
      \right)^{-\frac{\theta}{1-\sigma}}
        A_j
        \tau_{ji}^{-\theta}
       \left(
        \frac{\gamma_j}{L_j}
       \right)^{-\frac{\theta\gamma_j}{1+\theta\gamma_j}}
      \Omega_{j}^{\frac{1}{1+\theta\gamma_{j}}-1}
  \mathbb{P}_{j}^{\frac{1-\gamma_{j}}{1+\theta\gamma_{j}}} \ .
\end{align}

\paragraph{Uniqueness and Convergence}

We interpret the vector $x$ of our main theorem as a stacked vector of the equilibrium variables $\Omega_i$ and $\bbP_i$ and the function $F(x)$ as the right-hand side of the general equilibrium system given by \eqref{eq:one_sec_eq_1} and \eqref{eq:one_sec_eq_2}. With a slight abuse of notation we will refer to the respective elements of vector $F\left(x\right)$ by using them as an index, e.g. $F(x)_{\Omega_{is}}$.
By our assumption on trade costs, every location is reachable from all other locations, at least through a chain of intermediary trade partners and hence the property of connectedness (Definition~\ref{def:properties-of-rhs}(\ref{def:connectedness})) is satisfied for function $F(x)$ by assumption.
Moreover, the property of self-interaction (Definition~\ref{def:properties-of-rhs}(\ref{def:self_interaction})) is obviously satisfied as in each equation the left hand side variable also appears on the right-hand side and the combined multiplicative constants on the right-hand side are positive for internal flows, i.e. for barriers $\tau_{ii}$ which are finite by assumption.
Economically, it is also clear that the scaling property~\ref{def:properties-of-rhs}(\ref{def:scaling}) must be satisfied for some vector $u \in \bbR^N$ with at least one element different from 0, as we have yet to pick a numéraire. To see this mathematically, we scale all $\Omega_i$ and $\bbP_i$ by the factors $c^{u_{\Omega_i}}$ and $c^{u_{\bbP_i}}$, where we again refer to the respective elements of vector $u$ by using the element as an index. The scaling property will be satisfied if
\begin{align*}
  F\left(c^u x\right)_{\Omega_{i}}
  \hspace{-0.1cm}
  &=
  \hspace{-0.1cm}
  \sum_{j=1}^{J}\Gamma \hspace{-0.1cm}\left(\frac{\theta+1-\sigma}{\theta}\right)^{
  \hspace{-0.1cm}
  -\frac{\theta}{1-\sigma}}
  \hspace{-0.2cm}
  A_{i}\tau_{ij}^{-\theta}\left(\frac{\gamma_{j}}{L_{j}}\right)^{
  \hspace{-0.1cm}
  -\frac{\theta\gamma_{j}}{1+\theta\gamma_{j}}}
  \hspace{-0.2cm}
\left(c^{u_{\Omega_j}}\Omega_{j}\right)^{\frac{1}{1+\theta\gamma_{j}}}\left(c^{u_{\mathbb{P}_j}}\mathbb{P}_{j}\right)^{\frac{1-\gamma_{j}}{1+\theta\gamma_{j}}-1}
\hspace{-0.1cm}
= c^{u_{\Omega_i}} F\left(x\right)_{\Omega_{i}}
\\
F\left(c^u x\right)_{\mathbb{P}_i}
 \hspace{-0.1cm}
  &=
  \hspace{-0.1cm}
\sum_{j=1}^{J}\Gamma\hspace{-0.1cm}\left(\frac{\theta+1-\sigma}{\theta}\right)^{\hspace{-0.1cm}-\frac{\theta}{1-\sigma}}
\hspace{-0.2cm}
A_{j}\tau_{ji}^{-\theta}\left(\frac{\gamma_{j}}{L_{j}}\right)^{
\hspace{-0.1cm}
-\frac{\theta\gamma_{j}}{1+\theta\gamma_{j}}}
\hspace{-0.2cm}
\left(c^{u_{\Omega_j}}\Omega_{j}\right)^{\frac{1}{1+\theta\gamma_{j}}-1}\left(c^{u_{\mathbb{P}_j}}\mathbb{P}_{j}\right)^{\frac{1-\gamma_{j}}{1+\theta\gamma_{j}}}
\hspace{-0.1cm}
=
c^{u_{\bbP_i}} F\left(x\right)_{\bbP_{i}},
\end{align*}
which can be easily shown to hold if for all $i$ we set $u_{\bbP_i} = k$ and $u_{\Omega_i} = -\frac{\theta}{1+\theta}k$ for any value $k \neq 0$. Finally, this also implies that for any $k \neq 0$ the sign of all $u_{\bbP_i}$ will be the same and it will be different from the sign of all $u_{\Omega_i}$. Hence, the groups $\zeta_+$ and $\zeta_-$ of Definition~\ref{def:properties-of-rhs}(\ref{def:monotonicity}) are respectively formed by all $\Omega_{j}$ and all $P_{j}$. Consequently, the monotonicity behavior will be consistent with $u$ (see Definition~\ref{def:properties-of-rhs}(\ref{def:monotonicity})) if all partial derivatives of \eqref{eq:one_sec_eq_1} are (weakly) positive with respect to any $\Omega_j$ and (weakly) negative with respect to any $\bbP_j$ and vice versa for equation \eqref{eq:one_sec_eq_2}. Given that all constant terms in equations \eqref{eq:one_sec_eq_1} and \eqref{eq:one_sec_eq_2} are positive or 0, it is sufficient to verify that the exponents on $\Omega_j$ and $\bbP_j$ have the respective signs:
\begin{align*}
  \frac{1}{1+\theta\gamma_{j}} & >0\quad\Rightarrow \quad\frac{\partial F\left(x\right)_{\Omega_{i}}}{\partial\Omega_{j}}\geq0 
&
  \frac{1-\gamma_{j}}{1+\theta\gamma_{j}}-1 & <0\quad\Rightarrow \quad\frac{\partial F\left(x\right)_{\Omega_{i}}}{\partial \mathbb{P}_{j}}\leq0
\\
  \frac{1}{1+\theta\gamma_{j}}-1 & <0\quad\Rightarrow \quad\frac{\partial F\left(x\right)_{\mathbb{P}_{i}}}{\partial\Omega_{j}}\leq0 
&
  \frac{1-\gamma_{j}}{1+\theta\gamma_{j}} & >0\quad\Rightarrow \quad\frac{\partial F\left(x\right)_{\mathbb{P}_{i}}}{\partial\mathbb{P}_{j}}\geq0
\end{align*}
Clearly, variables in the same group ($\zeta_+$, $\zeta_-$) influence each other (weakly) positively and variables from different groups each other (weakly) negatively.

Since our equation system \eqref{eq:one_sec_eq_1}, \eqref{eq:one_sec_eq_2} thus satisfies properties (a)-(d) of Definition \ref{def:properties-of-rhs} for some $u\in\bbR^N$ with at least one entry different from $0$, our main theorem implies that any equilibrium solution must be up-to-scale unique and can be obtained by iterating the right-hand side for an initial guess of the solution. 

\paragraph{Previous Literature}
To the best of our knowledge, we are the first to show uniqueness and stability for multi-country Eaton-Kortum type model with varying labor cost shares in production across countries. Closest to our result is the main theorem of \cite{AllenArkolakisLi2022} which can be applied to the special case of our model in which labor cost shares are assumed to be equal across all countries (i.e. $\gamma_i = \gamma\ \forall i$).In this case, one can turn to their Theorem 1 (iib) and prove (column-wise up to scale) uniqueness by showing that the matrix 
\[
  \mathbf{A} = \begin{pmatrix} 
  \frac{1}{1+\theta\gamma} & \left\vert\frac{1-\gamma}{1+\theta\gamma}-1\right\vert \\ 
  \left\vert\frac{1}{1+\theta\gamma}-1\right\vert & \frac{1-\gamma}{1+\theta\gamma}
  \end{pmatrix}
\]
which consists of the absolute values of the respective right-hand side exponents of the equilibrium equation system, has a spectral radius equal to 1, with the latter result following from the Collatz-Wielandt formula \citep[p. 670]{Meyer2000} and the fact that the columns of $\mathbf{A}$ sum to 1. However, when labor shares are instead allowed to be country-specific in accordance with real world data, the dependence of the exponents on $j$ requires one to instead rely on Theorem 1 (iia) in \cite{AllenArkolakisLi2022}. This theorem shows uniqueness if a matrix $\mathbf{A}^{max}$ consisting of upper bounds of the respective absolute exponents in our equilibrium system across all $j$ (and strict upper bound for at least one $j$) has spectral radius of 1. However, this condition is violated for matrix
\[
  \mathbf{A}^{max} = \begin{pmatrix} 
  \max_j \frac{1}{1+\theta\gamma_{j}} + \varepsilon_j & \max_j \left\vert\frac{1-\gamma_{j}}{1+\theta\gamma_{j}}-1\right\vert + \varepsilon_j \\ 
  \max_j\left\vert\frac{1}{1+\theta\gamma_{j}}-1\right\vert + \varepsilon_j & \max_j\frac{1-\gamma_{j}}{1+\theta\gamma_{j}} + \varepsilon_j
  \end{pmatrix}
\]
with some $\varepsilon_j > 0$, since both columns sum to values larger 1 and, by the Collatz-Wielandt formula, the spectral radius is thus larger than 1. Hence, in contrast to our novel theorem, the theorem of \cite{AllenArkolakisLi2022} cannot be applied to show uniqueness in this setup.%
\footnote{
We note that, conversely, in many cases in which Theorem 1 (ii)b of \cite{AllenArkolakisLi2022} is applicable, our theorem can be applied as well. Yet, there are the following two exceptions: 
(1) If property~\eqref{def:connectedness} of our Definition~\ref{def:properties-of-rhs} is not satisfied, then our theorem cannot be applied, while there are no assumptions of this type in \cite{AllenArkolakisLi2022} (as a trade-off, since no irreducibility is assumed in \cite{AllenArkolakisLi2022}, the up-to-scale uniqueness established there is only column wise).
(2) There are cases in which the matrix $\modulus{A}$ can have spectral radius $1$ but $\rho\left(A\right) < 1$, and thus our scaling condition from Definition~\ref{def:properties-of-rhs}(\ref{def:scaling}) will be violated. In these cases, our main theorem cannot be applied either, while Theorem~1(ii)b of \cite{AllenArkolakisLi2022} is still applicable and yields (column-wise) up-to-scale uniqueness. We note that in those cases one would, due to the lack of scaling, intuitively even expect uniqueness rather than up-to-scale uniqueness.}

\subsection{Multisector model}
\label{subsec:multisec_trade}
We next return to our multi-sector trade model developed in Subsection~\ref{subsec:general_trade_model} but assume instead that production takes place without intermediates. This implies that all labor shares $\gamma_{is}$ are equal to 1 and all intermediate shares $\gamma_{irs}$ are equal to 0. Consequently, sectoral expenditures \eqref{eq_balanced_trade} collapse to $E_{is} = \alpha_{is}w_i L_i$. Moreover, unit input bundle costs \eqref{eq_marginal_costs} in all sectors now consist simply of the wage, i.e. $c_{is}=w_i$ and the definition of multilateral resistance simplifies to $\Omega_{is}=R_{is} w_i^{\theta_s}$, yielding  $R_{is} = \Omega_{is} w_i^{-\theta_s}$. Plugging sectoral expenditures and revenues into the remaining equilibrium equations \eqref{eq_equilibrium_1} through \eqref{eq_equilibrium_3} results in

\begin{align}
\label{eq_multsec_equilibrium_1}
    \Omega_{is} &=
    \sum_{j=1}^J 
      \Gamma
      \left(
        \frac{\theta_s + 1 - \sigma_s}{\theta_s}     
      \right)^{-\frac{\theta_s}{1-\sigma_s}}
      A_{is} \tau_{ijs}^{-\theta_s} \alpha_{js} L_{j} 
         \bbP_{js}^{-1} W_j^\frac{1}{1+\Theta}  
  \\
  \label{eq_multsec_equilibrium_2}
  \bbP_{is} &=
    \sum_{j=1}^J
    \Gamma
      \left(
        \frac{\theta_s + 1 - \sigma_s}{\theta_s}     
      \right)^{-\frac{\theta_s}{1-\sigma_s}}
        A_{js}
        \tau_{jis}^{-\theta_s}
        W_j^\frac{-\theta_s}{1+\Theta}
  \\
  \label{eq_multsec_equilibrium_3}
  W_i &= 
    \sum_{r=1}^S 
      L_i^{-1}\Omega_{ir}W_i^\frac{\Theta-\theta_r}{1+\Theta}\ ,
\end{align}
where we transformed the labor market clearing condition by introducing $W_i = w_i^{1+\Theta}$ and defining the constant $\Theta \equiv \sum_{s=1}^S \theta_s$. 

\paragraph{Uniqueness and Convergence}

We again interpret the vector $x$ of our main theorem as a stacked vector of the equilibrium variables $\Omega_{is}$, $\bbP_{is}$ and $W_i$, as well as the function $F(x)$ as the right-hand side of the general equilibrium system given by \eqref{eq_multsec_equilibrium_1} through \eqref{eq_multsec_equilibrium_3}, and use indices to refer to the respective elements of $F\left(x\right)$. 
By our assumption on trade costs, every country-sector is reachable from all other country-sectors, at least through a chain of intermediary trade partners and hence the property of connectedness (Definition~\ref{def:properties-of-rhs}(\ref{def:connectedness})) is satisfied for function $F(x)$ by assumption.
Moreover, the property of self-interaction (Definition~\ref{def:properties-of-rhs}(\ref{def:self_interaction})) is obviously satisfied as in each equation of \eqref{eq_multsec_equilibrium_3} the left hand side variable also appears on the right-hand side.
Economically, it is also clear that the scaling property~\ref{def:properties-of-rhs}(\ref{def:scaling}) must be satisfied for some vector $u \in \bbR^N$ with at least one element different from $0$, as we have yet to pick a numéraire. To see this mathematically, we scale all $\Omega_{is}$, $\bbP_{is}$ and $W_i$ by the factors $c^{u_{\Omega_{is}}}$, $c^{u_{\bbP_{is}}}$ and $c^{u_{W_i}}$, where we again refer to the respective elements of vector $u$ by using the element as an index. The scaling property will be satisfied if
\begin{align*}
F\left(c^u x\right)_{\Omega_{is}}& =\sum_{j=1}^{J}\Gamma\left(\frac{\theta_{s}+1-\sigma_{s}}{\theta_{s}}\right)^{-\frac{\theta_{s}}{1-\sigma_{s}}}A_{i}\tau_{ijs}^{-\theta_{s}}\alpha_{js}L_{j}\left(c^{u_{\mathbb{P}_{js}}}\mathbb{P}_{js}\right)^{-1}\left(c^{u_{W_{j}}}W_{j}\right)^{\frac{1}{1+\Theta}} = c^{u_{\Omega_{is}}}F\left(x\right)_{\Omega_{is}}
\\
F\left(c^u x\right)_{\mathbb{P}_{is}} & =\sum_{j=1}^{J}\Gamma\left(\frac{\theta_{s}+1-\sigma_{s}}{\theta_{s}}\right)^{-\frac{\theta_{s}}{1-\sigma_{s}}}A_{j}\tau_{jis}^{-\theta_{s}}\left(c^{u_{W_{j}}}W_{j}\right)^{\frac{-\theta_{s}}{1+\Theta}} = 
c^{u_{\bbP_{is}}}F\left(x\right)_{\bbP_{is}}
\\
F\left(c^u x\right)_{W_{i}} & =\sum_{r=1}^{S}\frac{1}{L_{i}}c^{u_{\Omega_{ir}}}\Omega_{ir}\left(c^{u_{W_{i}}}W_{i}\right)^{\frac{\Theta-\theta_{r}}{1+\Theta}}= c^{u_{W_{i}}}F\left(x\right)_{W_{i}},
\end{align*}
which can easily shown to hold if for all $i$ and $s$ we set $u_{W_i} = k$,  $u_{\mathbb{P}_{is}}=-\frac{\theta_{s}}{1+\Theta}k$ and $u_{\Omega_{is}}=\frac{1+\theta_{s}}{1+\Theta}k$ for any value $k \neq 0$. Finally, such a scaling vector also implies that for any $k \neq 0$ the sign of $u_{\Omega_{is}}$ is the same as the sign of $u_{W_i}$ and differs from the sign of $u_{\bbP_{is}}$ for all $i$ and $s$. Hence, the groups $\zeta_+$ and $\zeta_-$ of Definition~\ref{def:properties-of-rhs}(\ref{def:monotonicity}) are respectively formed by all $\Omega_{js}$ and $W_j$ on the one hand, and all $P_{js}$ on the other hand. Consequently, the monotonicity behavior will be consistent with $u$ (see Definition~\ref{def:properties-of-rhs}(\ref{def:monotonicity})) if all partial derivatives of \eqref{eq_multsec_equilibrium_1} are (weakly) positive with respect to any $W_j$ and (weakly) negative with respect to any $\bbP_{js}$, all partial derivatives of \eqref{eq_multsec_equilibrium_2} (weakly) negative with respect to any $W_j$ and all partial derivatives of \eqref{eq_multsec_equilibrium_3} (weakly) positive with respect to any $\Omega_{js}$ or $W_j$. Given that all constant terms in equations \eqref{eq:one_sec_eq_1} and \eqref{eq:one_sec_eq_2} are positive or 0, the sign of the exponents of $\Omega_{js}$, $\bbP_{js}$ and $W_j$ on the right-hand side of the equation system determine the sign of the respective partial derivative:
\begin{align*}
 \frac{\partial F\left(x\right)_{\Omega_{is}}}{\partial\Omega_{js}}&=0 &
 \frac{\partial F\left(x\right)_{\Omega_{is}}}{\partial\mathbb{P}_{js}}&\leq0 &  \frac{\partial F\left(x\right)_{\Omega_{is}}}{\partial W_{j}}&\geq0\\
 \frac{\partial F\left(x\right)_{\mathbb{P}_{is}}}{\partial\Omega_{js}}&=0 &
 \frac{\partial F\left(x\right)_{\mathbb{P}_{is}}}{\partial\mathbb{P}_{js}}&=0 &  \frac{\partial F\left(x\right)_{\mathbb{P}_{is}}}{\partial W_{j}}&\leq0\\
\frac{\partial F\left(x\right)_{W_{i}}}{\partial\Omega_{js}}&\geq0 &
\frac{\partial F\left(x\right)_{W_{i}}}{\partial\mathbb{P}_{js}}&=0 & 
\frac{\partial F\left(x\right)_{W_{i}}}{\partial W_{i}}&\geq0,
\end{align*}
whereby the last inequality follows from the fact that $\frac{\Theta-\theta_{r}}{1+\Theta}>0$. Clearly, variables in the same group ($\zeta_+$, $\zeta_-$) influence each other (weakly) positively and variables from different groups each other (weakly) negatively.

Since our equation system \eqref{eq_multsec_equilibrium_1} through \eqref{eq_multsec_equilibrium_3} thus satisfies properties (a)-(d) of Definition~\ref{def:properties-of-rhs} for some $u\in\bbR^N$ with at least one entry different from $0$, our main theorem implies that any equilibrium solution must be up-to-scale unique and can be obtained by iterating the right-hand side for an initial guess of the solution. 

\paragraph{Previous Literature}
We note here that our equation system satisfies the general form given in Theorem (iib) of \cite{AllenArkolakisLi2022}. However, in the current multi-sector setting, exponents of the equilibrium variables on the right-hand side of the system depend on the sector $s$.
Therefore, we cannot apply their Theorem 1 (iib) to establish (column-wise up-to-scale) uniqueness. Moreover, if we proceed as in section \ref{subsec:varying_labor_shares} and form a matrix equivalent to their matrix $\mathbf{A}$ consisting of the respective absolute values of exponents for each sector individually, each of these matrices has a spectral radius of 1. Hence, with heterogeneous sectoral $\theta_s$ taking for each element the largest entry across these matrices (plus some $\varepsilon>0$) as an upper bound, the resulting matrix $\mathbf{A}^{max}$ has a spectral radius larger 1 and part (iia) of their theorem cannot be applied either.

\section{Conclusion}

We have developed a novel mathematical theorem which provides sufficient conditions for uniqueness and stability of the fixed point solution of a 
very general equation system. We have shown that this theorem can be readily applied to establish these properties for workhorse quantitative trade models for which uniqueness and stability could not be proven with existing methods. Our
novel
theorem
thus breaks new ground in terms of applicability and increases our confidence in numerous simulations of trade models used to study a plethora of relevant policies and shocks, which take the numeric results as the unique model outcomes without an established proof.

\newpage
\let\oldbibliography\thebibliography
\renewcommand{\thebibliography}[1]{
	\oldbibliography{#1} \setlength{\itemsep}{0pt}}
\bibliography{main}

\begin{thebibliography}{}

\bibitem[\protect\astroncite{{Aliprantis} and
  {Border}}{2006}]{AliprantisBorder2006}
{Aliprantis}, C.~D. and {Border}, K.~C. (2006).
\newblock {\em {Infinite dimensional analysis. A hitchhiker's guide.}}
\newblock Berlin: Springer.

\bibitem[\protect\astroncite{Allen and Arkolakis}{2014}]{AllenArkolakis2014}
Allen, T. and Arkolakis, C. (2014).
\newblock {Trade and the Topography of the Spatial Economy}.
\newblock {\em {Quarterly Journal of Economics}}, 3(129):1085--1140.

\bibitem[\protect\astroncite{Allen et~al.}{2022}]{AllenArkolakisLi2022}
Allen, T., Arkolakis, C., and Li, X. (2022).
\newblock On the equilibrium properties of network models with heterogeneous
  agents.
\newblock Working paper, mimeo.

\bibitem[\protect\astroncite{Allen et~al.}{2020}]{AllenArkolakisTakahashi2020}
Allen, T., Arkolakis, C., and Takahashi, Y. (2020).
\newblock Universal gravity.
\newblock {\em Journal of Political Economy}, 128(2):393--433.

\bibitem[\protect\astroncite{Alvarez and Lucas}{2007}]{AlvarezLucas2007}
Alvarez, F. and Lucas, R.~J. (2007).
\newblock {General equilibrium analysis of the Eaton-Kortum model of
  international trade}.
\newblock {\em Journal of Monetary Economics}, 54(6):1726--1768.

\bibitem[\protect\astroncite{Antr{\`a}s and Chor}{2021}]{AntrasChor2021}
Antr{\`a}s, P. and Chor, D. (2021).
\newblock {Global Value Chains}.
\newblock In Gopinath, G., Helpman, E., and Rogoff, K., editors, {\em Handbook
  of International Economics}, volume~5. Elsevier.

\bibitem[\protect\astroncite{Armstrong}{1983}]{Armstrong1983}
Armstrong, M.~A. (1983).
\newblock {\em {Basic Topology}}.
\newblock Springer-Verlag New York Inc.

\bibitem[\protect\astroncite{Arrow and Debreu}{1954}]{ArrowDebreu1954}
Arrow, K. and Debreu, G. (1954).
\newblock {Existence of an Equilibrium for a Competitive Economy}.
\newblock {\em {Econometrica}}, 22(3):265–290.

\bibitem[\protect\astroncite{Balasko}{2009}]{Balasko2009}
Balasko, Y. (2009).
\newblock {\em {The Equilibrium Manifold: Postmodern Developments in the Theory
  of General Economic Equilibrium}}.
\newblock MIT Press.

\bibitem[\protect\astroncite{Caliendo and Parro}{2015}]{CaliendoParro2015}
Caliendo, L. and Parro, F. (2015).
\newblock {Estimates of the Trade and Welfare Effects of NAFTA}.
\newblock {\em Review of Economic Studies}, 82(1):1--44.

\bibitem[\protect\astroncite{Carvalho and Tahbaz-Salehi}{2019}]{CarvalhoTS2019}
Carvalho, V. and Tahbaz-Salehi, A. (2019).
\newblock {Production Networks: A Primer}.
\newblock {\em Annual Review of Economics}, 11:635--663.

\bibitem[\protect\astroncite{Costinot et~al.}{2012}]{costinot_what_2012}
Costinot, A., Donaldson, D., and Komunjer, I. (2012).
\newblock What {Goods} {Do} {Countries} {Trade}? {A} {Quantitative}
  {Exploration} of {Ricardo}'s {Ideas}.
\newblock {\em Review of Economic Studies}, 79(2):581--608.

\bibitem[\protect\astroncite{Costinot and
  Rodr\'iguez-Clare}{2014}]{CostinotRodriguezClare2014}
Costinot, A. and Rodr\'iguez-Clare, A. (2014).
\newblock {Trade Theory with Numbers: Quantifying the Consequences of
  Globalization}.
\newblock In Gopinath, G., Helpman, E., and Rogoff, K., editors, {\em Handbook
  of International Economics}, volume~4, chapter~4, pages 197--261. Elsevier.

\bibitem[\protect\astroncite{Dierker}{1972}]{Dierker1972}
Dierker, E. (1972).
\newblock Two remarks on the number of equilibria of an economy.
\newblock {\em Econometrica}, 40(5):951--953.

\bibitem[\protect\astroncite{Eaton and Kortum}{2002}]{EatonKortum2002}
Eaton, J. and Kortum, S. (2002).
\newblock {Technology, Geography, and Trade}.
\newblock {\em Econometrica}, 70(5):1741--1779.

\bibitem[\protect\astroncite{{Hinrichsen} and
  {Pritchard}}{2005}]{HinrichsenPritchard2005}
{Hinrichsen}, D. and {Pritchard}, A.~J. (2005).
\newblock {\em {Mathematical systems theory. I. Modelling, state space
  analysis, stability and robustness}}, volume~48.
\newblock Berlin: Springer.

\bibitem[\protect\astroncite{Kehoe}{1985}]{Kehoe1985}
Kehoe, T.~J. (1985).
\newblock Multiplicity of equilibria and comparative statics.
\newblock {\em The Quarterly Journal of Economics}, 100(1):119--147.

\bibitem[\protect\astroncite{Krugman}{1991}]{Krugman1991}
Krugman, P.~R. (1991).
\newblock {Increasing Returns and Economic Geography}.
\newblock {\em Journal of Political Economy}, 3(99):483–499.

\bibitem[\protect\astroncite{Kucheryavyy et~al.}{2021}]{Kucheryavyyetal2021a}
Kucheryavyy, K., Lin, G., and Rodríguez-Clare, A. (2021).
\newblock {Spatial Equilibria: The Case of Two Regions}.
\newblock Technical report, mimeo.

\bibitem[\protect\astroncite{Kucheryavyy et~al.}{2022}]{Kucheryavyyetal2022}
Kucheryavyy, K., Lin, G., and Rodríguez-Clare, A. (2022).
\newblock {Grounded by Gravity: A Well-Behaved Trade Model with Industry-Level
  Economies of Scale}.
\newblock Technical report, mimeo.

\bibitem[\protect\astroncite{Mas-Colell
  et~al.}{1995}]{Mas-CollelWinstonGreen1995}
Mas-Colell, A., Whinston, M.~D., and Green, J.~R. (1995).
\newblock {\em Microeconomic Theory}.
\newblock Oxford University Press, New York.

\bibitem[\protect\astroncite{McKenzie}{1959}]{McKenzie1959}
McKenzie, L. (1959).
\newblock {On the Existence of General Equilibrium for a Competitive Economy}.
\newblock {\em {Econometrica}}, 27(1):54–71.

\bibitem[\protect\astroncite{Meyer}{2000}]{Meyer2000}
Meyer, C.~D. (2000).
\newblock {\em Matrix Analysis and Applied Linear Algebra}.
\newblock Society for Industrial and Applied Mathematics.

\bibitem[\protect\astroncite{Minc}{1988}]{Minc1988}
Minc, H. (1988).
\newblock {\em Nonnegative matrices}.
\newblock New York etc.: Wiley.

\bibitem[\protect\astroncite{Redding}{2021}]{Redding2021}
Redding, S.~J. (2021).
\newblock {Trade and Geography}.
\newblock In Gopinath, G., Helpman, E., and Rogoff, K., editors, {\em Handbook
  of International Economics}, volume~5. Elsevier.

\bibitem[\protect\astroncite{Redding and
  Rossi-Hansberg}{2017}]{ReddingRossiHansberg2017}
Redding, S.~J. and Rossi-Hansberg, E. (2017).
\newblock {Quantitative Spatial Economics}.
\newblock {\em Annual Review of Economics}, 9:21--58.

\bibitem[\protect\astroncite{{Schaefer}}{1974}]{Schaefer1974}
{Schaefer}, H.~H. (1974).
\newblock {\em {Banach lattices and positive operators}}, volume 215.
\newblock Springer, Berlin.

\bibitem[\protect\astroncite{Schumpeter}{1954}]{Schumpeter}
Schumpeter, J. (1954).
\newblock {\em {History of Economic Analysis}}.
\newblock London: Allen and Unwin.

\bibitem[\protect\astroncite{Varian}{1975}]{Varian1975}
Varian, H.~R. (1975).
\newblock {A Third Remark on the Number of Equilibria of an Economy}.
\newblock {\em Econometrica}, 43(5-6):985--986.

\bibitem[\protect\astroncite{Wald}{1936}]{Wald1936}
Wald, A. (1936).
\newblock {Ü}ber ein ökonomisches {G}leichungssystem und eine
  {V}erallgemeinerung des {B}rouwerschen {F}ixpunktsatzes.
\newblock {\em Zeitschrift für Nationalökonomie}, 7:637--670.
\newblock Trans.: {A} {M}odel of {G}eneral {E}conomic {E}quilibrium. 1951.
  Econometrica 19:368-403.

\end{thebibliography}

\newpage

\begin{appendices}
\section*{Appendix}

The proof of our main theorem requires several building blocks. 
First, in Appendix~\ref{section:perron-frobenius-theory},
we recall a number of mathematical preliminaries, including some aspects of Perron--Frobenius theory for non-negative and irreducible matrices. 
In Appendix~\ref{sec:reformulations}, we then analyze the mathematical properties of $F$, the right-hand side of our fixed point equation, in particular those listed in Definition~\ref{def:properties-of-rhs}.
In Appendix~\ref{section:stability-non-expansive}, we prove a general attractivity result for fixed points of a certain class of mappings on metric spaces.
After these preparations, the proof of the main theorem
 is provided in Appendix~\ref{sec:proof-of-main-result}.

\section{Eigenvalues of non-negative matrices}
\label{section:perron-frobenius-theory}

\subsection{Spectral theoretic terminology}

In our setup, we consider matrices $A \in \bbR^{N \times N}$ with real coefficients. 
Recall that a number $\lambda \in \bbC$ is called an \emph{eigenvalue} of $A$ if there exists a vector $v \neq 0$ 
(called an \emph{eigenvector} of $A$ for the eigenvalue $\lambda$) such that $Av=\lambda v$. Note that both $\lambda$ and $v$ might be complex, even though all entries of $A$ were assumed to be real.
The set of all eigenvalues is denoted with $\sigma(A)$ and is often called the \emph{spectrum} of $A$. 
Moreover, the subspace of all vectors $v \in \bbC^N$ that satisfy $Av=0$ is called the \emph{kernel}, $\ker(A)$, of the matrix $A$.

Clearly a complex number $\lambda \in \bbC$ is an eigenvalue of some matrix $A \in \bbR^{N \times N}$ if and only if $\ker(\lambda-A)$ contains an element distinct from $0$ (here, we use the common notation $\lambda - A := \lambda \id_N - A$, where $\id_N \in \bbR^{N \times N}$ denotes the identity matrix).
Equivalently, one would require that the matrix $\lambda-A$ is not \emph{invertible}. 
In terms of \emph{determinants}, this means that $\lambda$ is an eigenvalue of $A$ if and only if the determinant $\det(\lambda-A)$ is $0$. 

The polynomial function $\lambda \mapsto \det(\lambda-A)$ is called \emph{characteristic polynomial of $A$}. It is often denoted by $\chi_A$, and 
its roots are precisely the eigenvalues of $A$. 
By the fundamental theorem of algebra, one can factorize the characteristic function as
\begin{align*}
    \chi_A(\lambda) = \prod_{i=1}^N (\lambda-\lambda_i),
\end{align*} 
where $\sigma(A) = \{ \lambda_1,\dots,\lambda_N \}$. 
Here some of the numbers $\lambda_1, \dots, \lambda_N$ might coincide. 
The number of occurrences of an eigenvalue in the factorized representation of the characteristic polynomial $\chi_A$ is called the \emph{algebraic multiplicity} of this eigenvalue. 

There is also a second notion of multiplicity which will be particularly important in our setting.
For an eigenvalue $\lambda \in \sigma(A)$ the vector subspace $\ker(\lambda - A)$ of $\bbC^N$ is called the \emph{eigenspace} of $A$ for the eigenvalue $\lambda$. 
It consists precisely of the eigenvectors of $A$ for the eigenvalue $\lambda$ and of the zero vector.
The dimension of the eigenspace is called the \emph{geometric multiplicity} of the eigenvalue $\lambda$. 

An eigenvalue $\lambda$ of $A$ is called \emph{algebraically simple} if its algebraic multiplicity is $1$. It is called \emph{geometrically simple} if its geometric multiplicity is $1$. 
It is a standard fact from linear algebra that the geometric multiplicity of an eigenvalue is always smaller than or equal to the geometric multiplicity. 
Hence, if an eigenvalue is algebraically simple, then it is automatically geometrically simple 
(but the converse implication does not hold, in general).

Lastly, we recall that the so-called \emph{spectral radius} $\rho(A)$ of a matrix $A$ is defined to be the maximal absolute value of all eigenvalues, i.e.,
\[
    \rho(A) := \max_{\lambda \in \sigma(A)} \vert \lambda \vert.
\]
Among all closed disks in the complex plane $\bbC$ with center $0$, the one with radius $\rho(A)$ is the smallest one which contains the spectrum $\sigma(A)$. This explains the terminology \emph{spectral radius}. A property of the spectral radius is, that for any given norm on $\bbC^N$, the induced matrix norm satisfies the inequality
\begin{align*}
    \rho(A) \le \norm{A}
\end{align*}
for every $N \times N$-matrix $A$.
However, there are matrices $A$ for which no norm exists that lets this inequality become an equality.

\subsection{Perron--Frobenius theory}

For certain matrices with non-negative entries there are useful criteria to ensure that the eigenvalues with modulus $\rho(A)$ are algebraically (and thus also geometrically) simple. 
This is part of the so-called \emph{Perron--Frobenius theory}, which we will now discuss in some excerpts.

Let $A,B \in \mathbb{R}^N$. 
We write $B \ge A$ if each entry of $B$ is larger than or equal to the corresponding entry of $A$. 
In particular, the notation $B \ge 0$ means that each entry of $B$ is non-negative;
in this case, $B$ is said to be \emph{non-negative}.
We define $\modulus{A} \in \mathbb{R}^N$ to be the matrix whose entries are the absolute values of the entries of $A$. Similar notation and terminology is also used for vectors in $\bbR^N$.

Now, let $0 \leq B \in \bbR^{N \times N}$. 
One can associate a directed graph with $N$ vertices to $B$, 
where there is an edge from the $k$-th vertex to the $j$-th vertex if and only if the entry $B_{jk}$ (i.e., the entry of $B$ in row $j$ and column $k$) is non-zero. 
The matrix $B$ is called \emph{irreducible} if the associated graph is \emph{strongly connected}, which means that one can walk from any vertex to any other vertex along the edges of the graph (where one is not allowed to walk oppositely to the direction of an edge).

The following two results for non-negative and for irreducible matrices are part of what is often referred to as the \emph{Perron--Frobenius theorem}:

\begin{theorem}[Perron--Frobenius for non-negative matrices]
    \label{thm:perron-frobenius-nonnegative}
    Let $0 \leq B \in \bbR^N$.
    Then the spectral radius $\rho(B)$ is an eigenvalue of $B$ and has at least one eigenvector $u \ge 0$.
\end{theorem}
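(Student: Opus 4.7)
The plan is to reduce the general non-negative case to the strictly positive case by perturbation, handle the strictly positive case via a Brouwer fixed-point argument, and then recover the original statement by taking a limit. Concretely, for $\varepsilon > 0$ I set $B_\varepsilon := B + \varepsilon \mathbf{1} \mathbf{1}^\tp$, so that every entry of $B_\varepsilon$ is strictly positive, and I aim to transfer the conclusion from $B_\varepsilon$ to $B$ as $\varepsilon \to 0$.

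For the strictly positive case, let $C \in \bbR^{N \times N}$ satisfy $C > 0$ entrywise, and consider the simplex $\Delta := \{x \in \bbR^N : x_j \ge 0, \ \sum_j x_j = 1\}$, which is compact and convex. Since $Cx$ has strictly positive entries for every $x \in \Delta$, the map $T(x) := Cx/\norm{Cx}_1$ is a continuous self-map of $\Delta$, and Brouwer's fixed-point theorem produces $v \in \Delta$ with $Cv = \lambda v$ and $\lambda := \norm{Cv}_1 > 0$; moreover $v = \lambda^{-1} Cv > 0$ entrywise. To identify $\lambda$ with $\rho(C)$, let $\mu \in \sigma(C)$ have eigenvector $w \ne 0$. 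The triangle inequality together with $C \ge 0$ gives $\modulus{\mu} \, \modulus{w} \le C \modulus{w}$ componentwise, and iterating yields $\modulus{\mu}^n \modulus{w} \le C^n \modulus{w}$. Since $v > 0$, there exists $\alpha > 0$ with $\modulus{w} \le \alpha v$, and then $C^n \modulus{w} \le \alpha C^n v = \alpha \lambda^n v$. Taking $n$-th roots and letting $n \to \infty$ in any coordinate with $w_j \ne 0$ forces $\modulus{\mu} \le \lambda$, so $\lambda = \rho(C)$.

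Returning to $B \ge 0$, applying the positive case to $B_\varepsilon$ produces $v_\varepsilon \in \Delta$ and $\lambda_\varepsilon = \rho(B_\varepsilon) > 0$ with $B_\varepsilon v_\varepsilon = \lambda_\varepsilon v_\varepsilon$. By compactness of $\Delta$, I pass along a subsequence $\varepsilon_k \to 0$ with $v_{\varepsilon_k} \to v \in \Delta$ (hence $v \ge 0$ and $v \ne 0$) and $\lambda_{\varepsilon_k} \to \lambda_* \ge 0$. Taking $k \to \infty$ in the eigenvalue equation yields $Bv = \lambda_* v$, so $\lambda_* \in \sigma(B)$ and in particular $\lambda_* \le \rho(B)$. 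Conversely, the entrywise bound $B \le B_\varepsilon$ implies $B^n \le B_\varepsilon^n$ entrywise, and Gelfand's formula $\rho(A) = \lim_n \norm{A^n}^{1/n}$ (applied with the max-column-sum norm, which is monotone under entrywise comparison of non-negative matrices) gives $\rho(B) \le \rho(B_\varepsilon) = \lambda_\varepsilon$, whence $\lambda_* \ge \rho(B)$. Thus $\lambda_* = \rho(B)$ and $v$ is the desired non-negative eigenvector.

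The main obstacle I anticipate is the final limiting step. Individual eigenvalues are not continuous functions of the matrix entries in general (they can cross or permute under small perturbations), so one cannot simply invoke continuous dependence of $\rho(B_\varepsilon)$ on $\varepsilon$ to assert convergence to $\rho(B)$ for a specific eigenvalue. The argument above sidesteps this by pinning $\lambda_\varepsilon$ from below by $\rho(B)$ via entrywise monotonicity of the spectral radius on non-negative matrices, and from above by $\rho(B)$ via closedness of the spectrum under matrix limits, so the pinching determines $\lim_k \lambda_{\varepsilon_k} = \rho(B)$ without any appeal to continuity of eigenvalues.
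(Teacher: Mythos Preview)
Your argument is correct. The paper itself does not prove this statement at all: it simply cites \cite[Theorem~4.2 on p.\,14]{Minc1988} as a reference, since this is a classical result used as an input rather than a contribution of the paper. What you have written is a self-contained proof via the standard route of Brouwer on the simplex for strictly positive matrices followed by an $\varepsilon$-perturbation and compactness argument for the general non-negative case. The pinching of $\lambda_{\varepsilon_k}$ between $\rho(B)$ (via entrywise monotonicity of $\rho$ on non-negative matrices and Gelfand's formula) and $\rho(B)$ (via the limit eigenvalue equation) is exactly the right way to avoid appealing to continuity of individual eigenvalues. One small omission: to extract a convergent subsequence of $(\lambda_{\varepsilon_k})$ you implicitly use that this sequence is bounded, which follows e.g.\ from $\lambda_\varepsilon = \rho(B_\varepsilon) \le \norm{B_\varepsilon}_1 \to \norm{B}_1$ as $\varepsilon \to 0$; you may wish to make this explicit.
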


\begin{proof}
    See for instance \cite[Theorem~4.2 on p.\,14]{Minc1988}.
\end{proof}

Recall from Section~\ref{sec:main-result} that we denote the set of vectors in $\bbR^N$ whose entries are all positive by $\bbR^N_{++}$.

\begin{theorem}[Perron--Frobenius for irreducible matrices]
    \label{thm:perron-frobenius-irreducible}
    Let $0 \leq B \in \bbR^N$ be irreducible. 
    Then:
    \begin{enumerate}[\upshape (a)]
        \item 
        All those eigenvalues of $B$ that have modulus $\rho(B)$ are algebraically (and thus geometrically) simple.
        
        \item
        There exists a vector $v \in \bbR^N_{++}$ which spans the eigenspace $\ker(\rho(B) - B)$, 
        
        \item 
        The matrix $B$ does not have a non-negative eigenvector for any eigenvalue except $\rho(B)$.
    \end{enumerate}
\end{theorem}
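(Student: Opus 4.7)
The plan is to use Theorem~\ref{thm:perron-frobenius-nonnegative} as a starting point and leverage irreducibility to upgrade the non-negative Perron eigenvectors, on both the right and the left, to strictly positive ones; after that, parts~(b) and~(c) follow from simple pairing arguments, algebraic simplicity of $\rho(B)$ follows from a generalized-eigenvector contradiction, and simplicity of the remaining peripheral eigenvalues follows from a phase-rescaling similarity.

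First I would apply Theorem~\ref{thm:perron-frobenius-nonnegative} to $B$ to obtain $v \ge 0$, $v \neq 0$, with $Bv = \rho(B)v$. To show $v \in \bbR^N_{++}$, partition the index set into $I = \{i : v_i > 0\}$ and $J = \{j : v_j = 0\}$. For $j \in J$ the eigenvalue equation yields $\sum_k B_{jk} v_k = 0$, which forces $B_{jk} = 0$ for every $k \in I$. In the directed graph associated with $B$ this means no edge points from a vertex in $I$ to a vertex in $J$, contradicting strong connectedness unless $J = \emptyset$. Hence $v \in \bbR^N_{++}$. The same argument applied to $B^\tp$, whose associated graph is the reverse of the graph of $B$ and therefore still strongly connected, produces a strictly positive left eigenvector $u$ with $u^\tp B = \rho(B) u^\tp$. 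Part~(c) is then immediate: if $Bw = \mu w$ with $0 \neq w \ge 0$, then $\mu u^\tp w = u^\tp B w = \rho(B) u^\tp w$, and $u^\tp w > 0$ forces $\mu = \rho(B)$.

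For the one-dimensionality claim in~(b), let $w$ be any eigenvector of $B$ for $\rho(B)$; passing to real and imaginary parts reduces the problem to $w \in \bbR^N$. Setting $c = \min_i w_i/v_i$, the vector $w - cv$ is still an eigenvector, is $\ge 0$, and has at least one zero coordinate, so the $I/J$ argument above forces $w - cv = 0$. For algebraic simplicity of $\rho(B)$ in part~(a), I would argue by contradiction: if the algebraic multiplicity of $\rho(B)$ exceeded its geometric multiplicity (which is $1$), Jordan theory would provide a vector $y$ with $(B - \rho(B)\id_N) y = v$; pairing with $u$ then yields $0 = u^\tp(B - \rho(B)) y = u^\tp v$, contradicting $u^\tp v > 0$.

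It remains to handle the other peripheral eigenvalues. Let $\lambda \in \sigma(B)$ with $|\lambda| = \rho(B)$ and $Bw = \lambda w$. The triangle inequality gives $B|w| \ge \rho(B)|w|$ componentwise, and pairing with $u^\tp$ forces equality, so $|w|$ is a non-negative eigenvector of $B$ for $\rho(B)$ and is therefore strictly positive. Writing $w_j = |w_j| e^{i\phi_j}$, setting $\omega = \lambda/\rho(B)$ and $D = \operatorname{diag}(e^{i\phi_j})$, the equality case of the triangle inequality yields $D^{-1} B D = \omega B$, so $B$ is similar to $\omega B$. Since the algebraic multiplicity of a complex number $\mu$ as an eigenvalue of $\omega B$ equals that of $\mu/\omega$ as an eigenvalue of $B$, the similarity forces the algebraic multiplicity of $\lambda$ in $B$ to equal the algebraic multiplicity of $\lambda/\omega = \rho(B)$ in $B$, which is $1$. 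The main obstacle I anticipate is precisely this last step: extracting the conjugation $D^{-1} B D = \omega B$ from the equality case of the triangle inequality (this is where the strict positivity of $|w|$, established along the way, is crucial) and translating similarity into the right multiplicity equality. Every other step reduces cleanly either to the $I/J$ partition argument or to a pairing with the left Perron eigenvector $u$.
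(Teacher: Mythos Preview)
Your argument is correct. The paper itself does not give a self-contained proof of this theorem: it simply refers to \cite[Theorem~1.6.5]{Schaefer1974} for part~(a) and to \cite[Theorems~4.1 and~4.4]{Minc1988} for parts~(b) and~(c). Your proposal, by contrast, builds everything from Theorem~\ref{thm:perron-frobenius-nonnegative} using only the $I/J$ partition trick for strict positivity, pairing with the left Perron eigenvector for~(c) and for algebraic simplicity of $\rho(B)$, and the Wielandt-type similarity $D^{-1}BD=\omega B$ for the remaining peripheral eigenvalues. All of these steps are sound; in particular your derivation of $D^{-1}BD=\omega B$ from equality in the triangle inequality is exactly the standard one, and the strict positivity of $|w|$ that you establish along the way is precisely what makes the phases $\phi_j$ well-defined. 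The translation of the similarity into equal algebraic multiplicities is also clean: $B\sim\omega B$ gives $\chi_B(\mu)=\omega^N\chi_B(\mu/\omega)$, so the order of vanishing at $\lambda$ equals that at $\lambda/\omega=\rho(B)$. What your approach buys over the paper's citations is self-containment within the appendix; what it costs is a page of argument that the references already provide.
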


\begin{proof}
    (a)
    See for instance \cite[Theorem~1.6.5(ii) on p.\,22]{Schaefer1974}.
    
    (b)
    The existence of an eigenvector $v \in \bbR^N_{++}$ for the eigenvalue $\rho(B)$ can, for instance, be found in \cite[Theorem~4.1 on p.\,11]{Minc1988}.
    The fact that $v$ spans $\ker(\rho(B)-B)$ follows from the geometric simplicity of the eigenvalue $\rho(B)$ stated in~(a).
    
    (c) 
    This follows, for instance, from \cite[Theorem~4.4]{Minc1988}.
\end{proof}

An irreducible matrix $0 \leq B \in \bbR^{N \times N}$ is called \emph{primitive} if $\rho(B)$ is the only eigenvalue of $B$ with modulus $\rho(B)$. 
The following criterion for primitivity will be useful in our proof.

\begin{proposition}
    \label{prop:primitivity}
    Let $0 \leq B \in \bbR^{N \times N}$ be irreducible. 
    If at least one diagonal entry of $B$ is non-zero, then $B$ is primitive.
\end{proposition}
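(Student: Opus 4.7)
The plan is to use the standard graph-theoretic characterization of primitivity: an irreducible non-negative matrix $B$ is primitive if and only if the period of its associated directed graph equals $1$, where the period is defined as the greatest common divisor of the lengths of all closed walks (cycles) through any vertex. Equivalently, primitivity can be read off from the peripheral spectrum: for an irreducible matrix $B$, the eigenvalues of modulus $\rho(B)$ are exactly $\rho(B)\omega^0,\rho(B)\omega^1,\dots,\rho(B)\omega^{h-1}$, where $\omega = e^{2\pi i/h}$ and $h$ is the period; primitivity is precisely the condition $h=1$. I would quote this from the same references already cited for Theorem~\ref{thm:perron-frobenius-irreducible} (e.g.\ \cite{Minc1988} or \cite{Schaefer1974}).

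With this tool in hand, the proof is then immediate. First, I would observe that a non-zero diagonal entry $B_{ii} > 0$ means, by the definition of the graph associated with $B$ given just above Theorem~\ref{thm:perron-frobenius-nonnegative}, that there is an edge from vertex $i$ to itself, i.e.\ a self-loop at vertex $i$. This self-loop is a closed walk of length $1$ through vertex $i$. Since $B$ is irreducible, the associated graph is strongly connected, so one can concatenate this self-loop with any round trip from any other vertex $j$ back to $j$ via $i$, showing that cycles of length $1$ contribute to the set whose gcd defines the period.

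Next, the gcd of a set of positive integers containing $1$ is $1$. Hence the period of the graph of $B$ equals $1$, which by the characterization above means that $\rho(B)$ is the unique eigenvalue of modulus $\rho(B)$. By the definition of primitivity stated just before the proposition, $B$ is primitive, which completes the proof.

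The ``hard part'' is really not mathematical but bibliographic: one needs to choose whether to invoke the characterization of the period via cycle lengths or directly via the peripheral spectrum. Both are standard, but the cycle-length formulation makes the argument a one-liner once irreducibility and the presence of a length-one cycle are noted. If the authors prefer to keep the appendix self-contained they could instead argue directly that some power $B^m$ has all entries positive by exploiting the self-loop to pad any walk of length $\ell$ (reaching vertex $j$ from vertex $k$, which exists by irreducibility for some $\ell \le N$) to a walk of any larger length through vertex $i$, yielding a common $m$ for which $(B^m)_{jk} > 0$ for all $j,k$; this is equivalent to primitivity and avoids quoting the cyclic structure of the peripheral spectrum.
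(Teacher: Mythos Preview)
Your argument is correct: the presence of a self-loop forces the period of the (strongly connected) digraph to be $1$, and by the standard characterization of the peripheral spectrum of an irreducible non-negative matrix this yields primitivity; the alternative via eventual positivity of $B^m$ is equally valid.

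Note, however, that the paper does not actually give a proof of this proposition---it simply cites \cite[Theorem~1.6.5, Corollary~2 on p.\,23]{Schaefer1974}. So your write-up is strictly more self-contained than the paper's treatment, which defers the entire argument to the literature. What you gain is that a reader does not need to chase the reference; what the paper gains is brevity, since this is a well-known textbook fact. Either choice is defensible, and your sketch would be an appropriate expansion if the authors wanted to make the appendix more self-contained.
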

\begin{proof}
    See for instance \cite[Theorem~1.6.5, Corollary~2 on p.\,23]{Schaefer1974}.
\end{proof}

\section{Analysis of the mathematical properties of the function $F$}
\label{sec:reformulations}

We consider the same set-up as described at the beginning of Section~\ref{sec:main-result}.
The purpose of this section is to present a mathematical analysis of the properties described in Definition~\ref{def:properties-of-rhs} (in particular, parts~(\ref{def:scaling}) and~(\ref{def:monotonicity}) of the definition). 
To do so, we make a change of variables which is described in the following subsection and which will also play an essential role in the proof of the main theorem in Section~\ref{sec:proof-of-main-result}.

\subsection{Derivatives of $F$ vs.\ elasticities of $F$}

In the proof of the main theorem, we will employ the following coordinate transformation for equation~\eqref{eq:fixed-point-economic}.
It is thus important to reformulate the properties of $F$ listed in Definition~\ref{def:properties-of-rhs} 
in terms of the new variables; we will do so in Proposition~\ref{prop:reformulation-G} below.

Let $\clog: \bbR^N_{++} \to \bbR^N$ and $\cexp: \bbR^N \to \bbR^N_{++}$ denote the componentwise natural logarithm and exponential function, respectively, i.e., the mappings given by
\begin{align*}
	\big(\clog(x)\big)_j = \ln(x_j)
	\qquad \text{and} \qquad 
	\big(\cexp(z)\big)_j = \exp(z_j)
\end{align*}
for all $x \in \bbR^N_{++}$ and all $z \in \bbR^N$. 
Let us define 
\begin{align*}
	G := \clog \circ F \circ \cexp: \bbR^N \to \bbR^N.
\end{align*}
We will often use the change of variables $z := \clog(x)$. 
Then the fixed point equation~\eqref{eq:fixed-point-economic} in $\bbR^N_{++}$ is equivalent to the fixed point equation
\begin{align}
	\label{eq:fixed-point-log}
	z = G(z).
\end{align}

It is important for our purposes to understand the derivatives of the components of $G(z)$ in terms of the derivatives of the components of $F(x)$. For all indices $j,k \in \{1, \dots, N\}$ one has
\begin{align*}
    \frac{\partial G_j(z)}{\partial z_k}
    = 
    \frac{\partial \log F_j(x)}{\partial \log x_k}
    = 
    \frac{1}{F_j(x)} \frac{\partial F_j(x)}{\partial \log x_k}
    =
    \frac{x_k}{F_j(x)} \frac{\partial F_j(x)}{\partial x_k}.
\end{align*}

In other words, the partial derivative $\frac{\partial G_j(z)}{\partial z_k}$ is precisely the elasticity of $F_j(x)$ with respect to the variable $x_k$, and this elasticity is a strictly positive multiple of the partial derivative $\frac{\partial F_j(x)}{\partial x_k}$.
Hence, the entries of the Jacobi matrices
\begin{align*}
    DG(z) = \left(\frac{\partial G_j(z)}{\partial z_k}\right)_{j,k \in \{1, \dots, N\}}
    \quad \text{and} \quad 
    DF(x) = \left(\frac{\partial F_j(x)}{\partial x_k}\right)_{j,k \in \{1, \dots, N\}}
\end{align*}
are equal up to strictly positive multiples. 
This immediately implies assertions~(b)--(d) in the following proposition:

\begin{proposition}
    \label{prop:reformulation-G}
    Let $u \in \bbR^N$.
    \begin{enumerate}[\upshape (a)]
        \item 
        The following assertions are equivalent:
        \begin{enumerate}[\upshape (i)]
            \item 
            The function $F$ scales with exponent $u$, i.e., one has $F(c^u x) = c^u F(x)$ for all $x \in \bbR^N_{++}$ and all $c \in \bbR_{++}$.
            
            \item 
            One has $G(\lambda u + z) = \lambda u + G(z)$ for all $z \in \bbR^N$ and all $\lambda \in \bbR$.
            
            \item 
            One has $DG(z) u = u$ for all $z \in \bbR^N$.
        \end{enumerate}
        
        \item 
         The monotonicity behavior of the function $F$ is consistent with $u$ if and only if there exists a partition of $\{1,\dots,N\}$ into two disjoint subsets $\zeta_+$ and $\zeta_-$ such that $u_j \ge 0$ for all $j \in \zeta_+$ and $u_j \le 0$ for all $j \in \zeta_-$ and such that the following property holds for all $z \in \bbR^N$ and all indices $j,k \in \{1,\dots,N\}$:
        \begin{align*}
            \frac{\partial G_j(z)}{\partial z_k} & \ge 0 \quad \text{if both } j \text{ and } k \text{ are located in the same of the sets } \zeta_+,\zeta_-  \\ 
            \frac{\partial G_j(z)}{\partial z_k} & \le 0 \quad \text{if } j \text{ and } k \text{ are not located in the same of the sets } \zeta_+, \zeta_-.
        \end{align*}
        
        \item 
        The function $F$ connects all variables if and only if the modulus of the Jacobian matrix of $G$, $\modulus{DG(z)}$,
        is irreducible for each $z \in \bbR^N$.
        
        \item 
        The function $F$ exhibits self-interaction if and only if for each $z \in \bbR^N_{++}$ there exists an index $j \in \{1, \dots, N\}$ such that
        $\frac{\partial G_j(z)}{\partial z_j} \not= 0$.
    \end{enumerate}
\end{proposition}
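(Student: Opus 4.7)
Parts (b), (c), and (d) are immediate consequences of the identity
\[
    \frac{\partial G_j(z)}{\partial z_k} = \frac{x_k}{F_j(x)}\,\frac{\partial F_j(x)}{\partial x_k}
\]
derived in the paragraph preceding the proposition (with $z = \clog(x)$). Since the scalar $x_k/F_j(x)$ is strictly positive on $\bbR^N_{++}$, multiplication by it preserves the sign of each Jacobian entry, preserves which entries vanish, and in particular preserves the zero pattern of the modulus matrix. This gives (b) with the same partition $\zeta_+, \zeta_-$ as appears in Definition~\ref{def:properties-of-rhs}(d); it gives (c), because $\modulus{DG(z)}$ and $\modulus{DF(x)}$ have identical zero patterns and hence are simultaneously irreducible; and it gives (d), because the diagonal entry $\partial G_j/\partial z_j$ vanishes exactly when $\partial F_j/\partial x_j$ does.

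The substance of the proposition is in (a). For (i) $\Leftrightarrow$ (ii), the plan is a change of variables. Setting $x = \cexp(z)$ and $c = e^\lambda$ (a bijection between $c \in \bbR_{++}$ and $\lambda \in \bbR$, which is why the universal quantifiers in (i) and (ii) really correspond), the componentwise identity $(c^u x)_j = c^{u_j} e^{z_j} = e^{u_j \lambda + z_j}$ shows that $c^u x = \cexp(\lambda u + z)$. Applying $\clog$ to both sides of the equation in (i) and using $G = \clog \circ F \circ \cexp$ together with $\clog(c^u F(x)) = \lambda u + \clog F(x) = \lambda u + G(z)$ turns (i) into (ii); the reverse direction is the same calculation read backward.

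For (ii) $\Leftrightarrow$ (iii), differentiating the identity in (ii) with respect to $\lambda$ at $\lambda = 0$ gives, by the chain rule, $DG(z) u = u$, which is (iii). For the converse, fix $z$ and define $h: \bbR \to \bbR^N$ by $h(\lambda) := G(\lambda u + z) - \lambda u - G(z)$. Then $h(0) = 0$ and, by (iii) applied at the point $\lambda u + z$, one has $h'(\lambda) = DG(\lambda u + z) u - u = 0$ for every $\lambda$; hence each component of $h$ is constant by the one-dimensional fundamental theorem of calculus, and so $h \equiv 0$, which is (ii). I do not expect any real obstacle: the only subtle points are to recognize the correspondence $c \leftrightarrow \ln c$ on the correct domains in (a), and to phrase the $h$-argument componentwise so that vanishing of a vector-valued derivative on a straight line yields constancy along that line.
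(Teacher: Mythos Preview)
Your proof is correct and follows essentially the same approach as the paper. The only cosmetic difference is in the implication (iii) $\Rightarrow$ (ii): you argue that the auxiliary function $h(\lambda) = G(\lambda u + z) - \lambda u - G(z)$ has vanishing derivative and is therefore constant, whereas the paper writes $G(\lambda u + z) - G(z)$ as an integral of $DG(\gamma(t))\,\lambda u$ along the line $\gamma(t) = t\lambda u + z$ and uses (iii) inside the integrand; both are the same one-dimensional fundamental theorem of calculus argument in slightly different packaging.
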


\begin{proof}
    As already mentioned, assertions~(b)--(d) follow immediately from the observations made before the proposition, so let us prove~(a).
    Due to standard properties of the logarithm and the exponential function, (i) and~(ii) are equivalent, so we only have to prove that~(ii) and~(iii) are equivalent.
    
    ``(ii) $\Rightarrow$ (iii)''
    In the equality $G(\lambda u + z) = \lambda u + G(z)$, compute the derivative of both sides with respect to $\lambda$ and afterwards substitute $\lambda = 0$; this yields $DG(z)u = u$.
    
    ``(iii) $\Rightarrow$ (ii)'' 
    Fix $z \in \bbR^N$ and $\lambda \in \bbR$ and consider the path $\gamma: [0,1] \to \bbR^N$, $\gamma(t) = t \lambda u + z$ that parametrizes the straight line from $z$ to $\lambda u + z$. 
    Then it follows from the fundamental theorem of calculus (for functions with values in $\bbR^N$) that
    \begin{align*}
        G(\lambda u + z) - G(z) 
        & =
        G(\gamma(1)) - G(\gamma(0))
        = 
        \int_0^1 \frac{\dx}{\dx t} G(\gamma(t)) \dxInt t
        \\
        & = 
        \int_0^1 DG(\gamma(t)) \frac{\dx}{\dx t} \gamma(t) \dxInt t 
        = 
        \int_0^1 DG(\gamma(t)) \lambda u\dxInt t 
        = 
        \lambda \int_0^1 u \dxInt t
        = 
        \lambda u,
    \end{align*}
    where we used~(iii) for the penultimate equality.
\end{proof}

Note that if $u$ has at least one entry distinct from $0$ (in other words, $u$ is not the zero vector), then (a)(iii)
asserts that for each $z \in \bbR^N$, the number $1$ is an eigenvalue of the matrix $DG(z)$ with eigenvector $u$.

\subsection{The monotonicity condition}
\label{subsection:monotonicity-condition}

The purpose of this section is to discuss property~(\ref{def:monotonicity}) in Definition~\ref{def:properties-of-rhs} in more detail. 
We begin with a general spectral theoretic result about the domination of matrices. 

\begin{theorem}
    \label{thm:domination-of-matrices}
    Let $A, B \in \bbR^{N \times N}$ and $\modulus{A} \le B$, and assume that $B$ is irreducible. 
    Let $u \in \bbR^N$ be a non-zero vector such that $Au = u$.
    Then the following assertions are equivalent:
    \begin{enumerate}[\upshape (i)]
        \item 
        The matrix $B$ has spectral radius $1$.
        
        \item 
        One has $B\modulus{u} = \modulus{u}$.
        
        \item 
        There exists a non-zero vector $0 \le v \in \bbR^N$ such that $Bv = v$.
        
        \item 
        One has $\modulus{A} = B$ and there exists a partition of $\{1, \dots, N\}$ into two disjoints sets $\zeta_+,\zeta_-$ such that $u_j \ge 0$ for all $j \in \zeta_+$ and $u_j \le 0$ for all $j \in \zeta_-$ and such that the following property holds for all $j,k \in \{1,\dots,N\}$:
        \begin{align*}
            A_{jk} & \ge 0 \quad \text{if both } j \text{ and } k \text{ are located in the same of the sets } \zeta_+,\zeta_-  \\ 
            A_{jk} & \le 0 \quad \text{if } j \text{ and } k \text{ are not located in the same of the sets } \zeta_+, \zeta_-.
        \end{align*}
    \end{enumerate}
    If one (hence all) of these assertions holds, then all entries of $u$ are non-zero (and thus, the sets $\zeta_+$ and $\zeta_-$ in~(iv) are in fact uniquely determined by $u$), and the eigenspace of $A$ for the eigenvalue $1$ is one-dimensional (and thus spanned by $u$).
    Moreover, the matrices $A$ and $B$ are similar.%
    \footnote{I.e.,\ there exists an invertible $N \times N$-matrix $C$ such that $A = CBC^{-1}$.}
\end{theorem}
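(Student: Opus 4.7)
The plan is to prove the chain (i) $\Rightarrow$ (ii) $\Rightarrow$ (iii) $\Rightarrow$ (i), then the equivalence (ii) $\Leftrightarrow$ (iv), and finally derive the ``moreover'' claims from a diagonal sign-change device. The backbone of everything is the elementary squeeze
\[
    \modulus{u} \;=\; \modulus{Au} \;\le\; \modulus{A}\modulus{u} \;\le\; B\modulus{u},
\]
which follows from $Au = u$, the triangle inequality applied rowwise, and the hypothesis $\modulus{A} \le B$. I would combine this with Perron--Frobenius theory (Theorem~\ref{thm:perron-frobenius-irreducible}) applied both to $B$ and to its transpose.

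For (i) $\Rightarrow$ (ii), I would apply Theorem~\ref{thm:perron-frobenius-irreducible} to $B^\top$ (which is still irreducible and nonnegative with the same spectral radius) to obtain a strictly positive left eigenvector $v \in \bbR^N_{++}$ satisfying $v^\top B = v^\top$. Pairing $v^\top$ with the entrywise nonnegative vector $B\modulus{u} - \modulus{u}$ yields $v^\top(B\modulus{u} - \modulus{u}) = 0$, and strict positivity of $v$ forces $B\modulus{u} = \modulus{u}$. The step (ii) $\Rightarrow$ (iii) is immediate since $u \neq 0$ gives $\modulus{u} \neq 0$, and for (iii) $\Rightarrow$ (i) Theorem~\ref{thm:perron-frobenius-irreducible}(c) tells us that a nonnegative eigenvector of $B$ can only correspond to the eigenvalue $\rho(B)$, so $\rho(B) = 1$.

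For (ii) $\Leftrightarrow$ (iv), assuming (ii) forces both inequalities in the squeeze to be equalities. Combining $B\modulus{u} = \modulus{u}$ with Theorem~\ref{thm:perron-frobenius-irreducible}(b) gives $\modulus{u} \in \bbR^N_{++}$, so every entry of $u$ is non-zero. From $(B - \modulus{A})\modulus{u} = 0$, the nonnegativity of $B - \modulus{A}$, and the strict positivity of $\modulus{u}$, I read off $\modulus{A} = B$ entrywise. The remaining equality $\modulus{Au} = \modulus{A}\modulus{u}$ is a rowwise real triangle-inequality equality with $(Au)_j = u_j \neq 0$; it forces $\operatorname{sign}(A_{jk} u_k) = \operatorname{sign}(u_j)$ whenever $A_{jk} \neq 0$. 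Setting $\zeta_+ := \{j : u_j > 0\}$ and $\zeta_- := \{j : u_j < 0\}$ then translates this sign rule into exactly the sign pattern demanded in (iv). The converse (iv) $\Rightarrow$ (ii) is a direct rowwise computation: splitting each index $j$ according to membership in $\zeta_+$ or $\zeta_-$, one reads off $(B\modulus{u})_j = \modulus{u_j}$ from $(Au)_j = u_j$, with all cross terms contributing with matching signs.

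For the ``moreover'' statements the natural device is the diagonal sign matrix $D = \operatorname{diag}(d_1, \dots, d_N)$ with $d_j = +1$ for $j \in \zeta_+$ and $d_j = -1$ for $j \in \zeta_-$. The sign pattern in (iv) immediately yields $DAD \ge 0$ together with $\modulus{DAD} = \modulus{A} = B$, hence $DAD = B$; since $D^2 = \id_N$, this is the required similarity $A = DBD$. Under this similarity, the eigenspace $\ker(\id_N - A)$ corresponds bijectively to $\ker(\id_N - B) = \operatorname{span}(\modulus{u})$ (one-dimensional by Theorem~\ref{thm:perron-frobenius-irreducible}(b)), and one checks $D\modulus{u} = u$, so $\ker(\id_N - A) = \operatorname{span}(u)$ is one-dimensional as claimed. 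The main obstacle I anticipate is the sign analysis in the implication (ii) $\Rightarrow$ (iv): extracting the global partition $(\zeta_+, \zeta_-)$ from the rowwise equality cases of the real triangle inequality, and verifying that this single partition is simultaneously consistent across all rows, is where the argument has to be carried out most carefully.
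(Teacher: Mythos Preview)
Your proof is correct and follows essentially the same route as the paper for the equivalence of (i)--(iv): the same squeeze $\modulus{u} = \modulus{Au} \le \modulus{A}\modulus{u} \le B\modulus{u}$, the same use of a strictly positive left Perron eigenvector of $B$ for (i) $\Rightarrow$ (ii), and the same rowwise triangle-inequality analysis for (ii) $\Rightarrow$ (iv). Your worry about consistency of the partition across rows is unfounded: since $\zeta_+,\zeta_-$ are defined globally by the signs of the entries of $u$ (all of which are non-zero once (ii) holds), the row-by-row sign conclusions automatically refer to one and the same partition.

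The one genuine difference is in the ``moreover'' part. The paper obtains the similarity $A \sim B$ by invoking a result from Schaefer's monograph, whereas you construct the conjugating matrix explicitly as the diagonal sign matrix $D$ and verify $DAD = \modulus{A} = B$ directly from the sign pattern in (iv). Your argument is more elementary and fully self-contained, and it also makes the identification $\ker(\id_N - A) = D\ker(\id_N - B) = \operatorname{span}(u)$ transparent. The paper's citation buys brevity; your construction buys insight and avoids an external reference.
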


\begin{proof}
    We first note that if~(ii) holds, then every entry of $\modulus{u}$ is strictly positive since $B$ is irreducible
    (Theorem~\ref{thm:perron-frobenius-irreducible}(b)), and hence every entry of $u$ is non-zero.
    Now we prove the claimed equivalences.
    The geometric simplicity of the eigenvalue $1$ of $A$ will be shown at the end of the proof.
    
    ``(ii) $\Rightarrow$ (iii)'' 
    This follows immediately by choosing $v = \modulus{u}$.
    
    ``(iii) $\Rightarrow$ (i)''
    According to~(iii), the number $1$ is an eigenvalue of $B$ with a non-negative eigenvector. 
    So it follows from Theorem~\ref{thm:perron-frobenius-irreducible}(c) that $1 = \rho(B)$.
    
    ``(i) $\Rightarrow$ (ii)''
    Since $B$ is irreducible, so is its transposed matrix $B^\tp$.
    By applying Theorem~\ref{thm:perron-frobenius-irreducible}(b) to $B^\tp$, we see that there exists a vector $d \in \bbR^N_{++}$ such that $B^\tp d = d$, and hence $d^\tp B = d^\tp$.
    Now observe that
    \begin{align*}
        \modulus{u} = \modulus{Au} \le \modulus{A} \modulus{u} \le B \modulus{u},
    \end{align*}
    so the vector $B\modulus{u} - \modulus{u}$ is non-negative.
    As we have
    \begin{align*}
        d^\tp (B \modulus{u} - \modulus{u}) = d^\tp B \modulus{u} - d^\tp \modulus{u} = 0
    \end{align*}
    and as every entry of $d$ is positive, it follows that every entry of $B \modulus{u} - \modulus{u}$ is $0$. 
    Hence, $B\modulus{u} = \modulus{u}$.
    
    ``(ii) $\Rightarrow$ (iv)''
    We have already observed that, as~(ii) holds, every entry of $\modulus{u}$ is strictly positive and therefore, each entry of $u$ is non-zero.
    Moreover, by using the same computation as in the proof of the previous implication we see that 
    \begin{align*}
        \modulus{u} = \modulus{Au} \le \modulus{A} \modulus{u} \le B \modulus{u} = \modulus{u},
    \end{align*}
    so actually $\modulus{A}\modulus{u} = B \modulus{u}$, or equivalently $(\modulus{A}-B)\modulus{u} = 0$.
    As $\modulus{A} \le B$ and all entries of $\modulus{u}$ are strictly positive, this implies that $\modulus{A} = B$.
    
    Finally, define the sets
    \begin{align*}
        \zeta_+ := \big\{j \in \{1, \dots, N\}: \; u_j > 0 \big\}
        \quad \text{and} \quad 
        \zeta_- := \big\{j \in \{1, \dots, N\}: \; u_j < 0 \big\}.
    \end{align*}
    Since every entry of $u$ is non-zero, these two sets are a partition of $\{1, \dots, N\}$.
    Consider an index $j \in \{1, \dots, N\}$.
    It follows from $Au = u$ and $\modulus{u} = B\modulus{u} = \modulus{A}\modulus{u}$ that 
    \begin{align*}
        u_j = \sum_{k=1}^N A_{jk} u_k 
        \qquad \text{and} \qquad 
        \modulus{u_j} = \sum_{k=1}^N \modulus{A_{jk}} \modulus{u_k},
    \end{align*}
    so 
    \begin{align*}
        \modulus{\sum_{k=1}^N A_{jk} u_k} = \sum_{k=1}^N \modulus{A_{jk} u_k}.
    \end{align*}
    Thus, for a fixed $j$, either all the numbers $A_{jk}u_k$ are $\ge 0$, or all of them are $\le 0$ (since we have equality in the triangle inequality), and the same inequality is then true for the number $u_j$.
    
    Hence, if $j \in \zeta_+$, all the numbers $A_{jk}u_k$ are $\ge 0$ and thus we conclude that $A_{jk} \ge 0$ for $k \in \zeta_+$ and $A_{jk} \le 0$ for $k \in \zeta_-$.
    If, on the other hand, $j \in \zeta_-$, then all the numbers $A_{jk}u_k$ are $\le 0$, so $A_{jk} \le 0$ for all $k \in \zeta_+$ and $A_{jk} \ge 0$ for $k \in \zeta_-$.
    
    ``(iv) $\Rightarrow$ (ii)'' 
    As $\modulus{A} = B$, we only need to show that $\modulus{A}\modulus{u} = \modulus{u}$. 
    Fix $j \in \{1, \dots, N\}$. 
    We distinguish two cases:
    
    \emph{1st case: $j \in \zeta_+$.}
    In this case,
    \begin{align*}
        \modulus{u_j} = u_j = \sum_{k=1}^N A_{jk} u_k = \sum_{k \in \zeta_+} \underbrace{A_{jk}}_{\ge 0} \underbrace{u_k}_{\ge 0} + \sum_{k \in \zeta_-} \underbrace{A_{jk}}_{\le 0} \underbrace{u_k}_{\le 0} = \sum_{k=1}^N \modulus{A_{jk}} \modulus{u_k}.
    \end{align*}
    
    \emph{2nd case: $j \in \zeta_-$.}
    In this case,
    \begin{align*}
        \modulus{u_j} = -u_j = \sum_{k=1}^N -A_{jk} u_k = \sum_{k \in \zeta_+} -\underbrace{A_{jk}}_{\le 0} \underbrace{u_k}_{\ge 0} + \sum_{k \in \zeta_-} -\underbrace{A_{jk}}_{\ge 0} \underbrace{u_k}_{\le 0} = \sum_{k=1}^N \modulus{A_{jk}} \modulus{u_k}.
    \end{align*}
    So indeed $\modulus{u} = \modulus{A} \modulus{u}$.
    \medskip 
    
    Now assume that the equivalent assertions~(i)--(iv) are satisfied. 
    It remains to prove that $A$ and $B$ are similar (then the eigenspace of $A$ for the eigenvalue $1$ is automatically one-dimensional since the same is true for $\ker(1-B)$ as $0 \le B$ is irreducible and has spectral radius $1$, see Theorem \ref{thm:perron-frobenius-irreducible}(b)).
    However, since $B = \modulus{A}$ and since $1$ is an eigenvalue of $A$, the similarity of $A$ and $B$ follows from \cite[Proposition~1.6.4 on p.\,21]{Schaefer1974}.%
    \footnote{Note that there is a small inaccuracy in this reference: the quoted result is only true in the form stated there if both matrices have spectral radius $1$, which is the case in our setting. With different spectral radii the claimed equality exhibits the wrong scaling behavior under multiplication with positive scalars.}
\end{proof}

If $B$ is even primitive in the above theorem, then we get the following stronger property:

\begin{corollary}
    \label{cor:dom-primitive}
    In the situation of Theorem~\ref{thm:domination-of-matrices}, assume that the equivalent assertions~(i)--(iv) are satisfied and that $B$ is primitive. 
    Then $1$ is the only eigenvalue of $A$ with modulus $1$.
\end{corollary}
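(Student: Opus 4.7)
The plan is to simply invoke the similarity of $A$ and $B$ already established at the end of Theorem~\ref{thm:domination-of-matrices}. The whole point of the corollary is that once we know $A$ and $B$ are similar as matrices, their spectra must coincide as multisets (since similar matrices have identical characteristic polynomials). Consequently every spectral statement about $B$ transfers verbatim to $A$.

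More concretely, I would argue as follows. By Theorem~\ref{thm:domination-of-matrices}, the assumption that assertions~(i)--(iv) hold gives us an invertible matrix $C \in \bbR^{N \times N}$ with $A = CBC^{-1}$. For any eigenvalue $\lambda$ of $A$ with eigenvector $w \neq 0$, one has $B(C^{-1}w) = C^{-1}Aw = \lambda C^{-1}w$, and $C^{-1}w \neq 0$, so $\lambda$ is an eigenvalue of $B$. Conversely, every eigenvalue of $B$ is an eigenvalue of $A$. Hence $\sigma(A) = \sigma(B)$. By the assumption that $B$ is primitive, $\rho(B) = 1$ is the only eigenvalue of $B$ of modulus $1$, and so $1$ is the only eigenvalue of $A$ of modulus $1$ as well.

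There is essentially no obstacle here, since all the work has been done inside Theorem~\ref{thm:domination-of-matrices}: the nontrivial content is the equality $\modulus{A} = B$ (extracted from the monotonicity/sign structure in~(iv)) together with the spectral similarity result cited from \cite{Schaefer1974} that turns the relation $B = \modulus{A}$ into an actual similarity once one knows $1 \in \sigma(A)$ and $\rho(B) = 1$. The corollary is thus best read as a clean packaging of those ingredients under the additional hypothesis of primitivity.
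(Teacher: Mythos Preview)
Your proposal is correct and follows essentially the same approach as the paper: both invoke the similarity of $A$ and $B$ established at the end of Theorem~\ref{thm:domination-of-matrices} and then use that similar matrices share the same spectrum, so primitivity of $B$ transfers the desired conclusion to $A$. Your write-up simply makes the spectral-equality argument slightly more explicit than the paper's one-line version.
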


\begin{proof}
    This follows immediately from the similarity of $A$ and $B$
    and from the fact that $B$ is primitive, 
    since similar matrices always have the same set of eigenvalues.
\end{proof}

We can immediately reformulate Theorem~\ref{thm:domination-of-matrices} for our function $F$:

\begin{corollary} 
    \label{cor:signs-vs-spectral-radius}
    Let $u \in \bbR^N$ be a non-zero vector 
    and assume that $F$ scales with exponent $u$ and that $F$ connects all components.
    Then the following assertions are equivalent:
    \begin{enumerate}[\upshape (i)]
        \item 
        For each $z \in \bbR^N$ the modulus of the Jacobi matrix of $G$, $\modulus{DG(z)}$, has spectral radius $1$.
        
        \item 
        For each $z \in \bbR^N$ one has $\modulus{DG(z)} \modulus{u} = \modulus{u}$.
        
        \item 
        For each $z \in \bbR^N$ there exists a non-zero vector $0 \le v \in \bbR^N$ such that $\modulus{DG(z)} v = v$.
        
        \item 
        The monotonicity behavior of the function $F$ is consistent with $u$.
    \end{enumerate}
    If one (hence all) of these assertions holds, then all entries of $u$ are distinct from $0$ (and thus, the sets $\zeta_+$ and $\zeta_-$ in Definition~\ref{def:properties-of-rhs}(\ref{def:monotonicity}) are uniquely determined by $u$), and for each $z \in \bbR^N$ the eigenspace of $DG(z)$ for the eigenvalue $1$ is one-dimensional (and thus spanned by $u$).
\end{corollary}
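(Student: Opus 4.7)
The plan is to deduce this corollary essentially as a direct translation of Theorem~\ref{thm:domination-of-matrices} into the language of $F$ and $G$, using Proposition~\ref{prop:reformulation-G} as the dictionary between the two formulations. Throughout, I will apply the theorem pointwise in $z$, setting $A := DG(z)$ and $B := \modulus{DG(z)}$ for each $z \in \bbR^N$.

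First I would unpack the standing hypotheses in terms of $G$. Because $F$ scales with exponent $u$, Proposition~\ref{prop:reformulation-G}(a) gives $DG(z) u = u$ for every $z \in \bbR^N$, so the non-zero vector $u$ is an eigenvector of $A = DG(z)$ for the eigenvalue $1$. Because $F$ connects all variables, Proposition~\ref{prop:reformulation-G}(c) yields that $B = \modulus{DG(z)}$ is irreducible for every $z$. Moreover, by construction $\modulus{A} \le B$; in fact $\modulus{A} = B$. Thus, for each fixed $z$, the matrices $A$, $B$, and the vector $u$ satisfy the hypotheses of Theorem~\ref{thm:domination-of-matrices}.

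Next I would match the four conditions. Condition~(i) of the corollary is exactly condition~(i) of the theorem applied to every $z$. Condition~(ii) of the corollary is condition~(ii) of the theorem applied to every $z$. Condition~(iii) of the corollary is condition~(iii) of the theorem applied to every $z$. The only point requiring genuine translation is condition~(iv): condition~(iv) of Theorem~\ref{thm:domination-of-matrices} asks for a partition $\zeta_+, \zeta_-$ of $\{1,\dots,N\}$ with $u_j \ge 0$ on $\zeta_+$ and $u_j \le 0$ on $\zeta_-$ such that the entries of $A = DG(z)$ obey the prescribed sign pattern (the requirement $\modulus{A} = B$ is automatic here). By Proposition~\ref{prop:reformulation-G}(b), this sign condition on the entries of $DG(z)$, holding for every $z$, is equivalent to the statement that the monotonicity behavior of $F$ is consistent with $u$. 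Note that the partition is required to be the same for all $z$; but the concluding clause of the theorem, which I invoke next, shows that as soon as one of (i)--(iv) holds, the partition is automatically determined by the signs of the entries of $u$, hence independent of $z$.

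The final assertions of the corollary then follow immediately from the concluding clause of Theorem~\ref{thm:domination-of-matrices}: when any of the equivalent conditions holds, the theorem asserts that every entry of $u$ is non-zero and that the eigenspace of $A = DG(z)$ for the eigenvalue $1$ is one-dimensional, hence spanned by $u$. Since this holds at every $z$, the corollary follows. The main conceptual obstacle is not the logic but the bookkeeping: one must carefully verify that the pointwise-in-$z$ application of Theorem~\ref{thm:domination-of-matrices} yields a global partition $\zeta_+, \zeta_-$ independent of $z$, which is guaranteed by the uniqueness-of-partition assertion at the end of the theorem, and that $\modulus{A} \le B$ holds with equality (so that the extra equality clause in condition~(iv) of the theorem is vacuous in our setting).
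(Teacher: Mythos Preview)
Your proposal is correct and follows essentially the same route as the paper: apply Theorem~\ref{thm:domination-of-matrices} pointwise in $z$ with $A := DG(z)$ and $B := \modulus{DG(z)}$, using Proposition~\ref{prop:reformulation-G} to translate the hypotheses and condition~(iv), and invoke the uniqueness of the partition to ensure that $\zeta_+,\zeta_-$ do not depend on $z$. Your write-up is simply a more detailed version of the paper's argument, with the bookkeeping (the automatic equality $\modulus{A}=B$ and the $z$-independence of the partition) made explicit.
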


\begin{proof}
    According to Proposition~\ref{prop:reformulation-G}(a), we have $DG(z)u = u$ for each $z \in \bbR^N$.
    Hence, the claim follows by applying, for each $z \in \bbR^N$, Theorem~\ref{thm:domination-of-matrices} to the matrices $A := DG(z)$ and $B := \modulus{DG(z)}$
    (note that $\zeta_+$ and $\zeta_-$ are uniquely determined by $u$ in Theorem~\ref{thm:domination-of-matrices} and will hence be the same no matter which $z$ we use in the definition of $A$ and $B$, allowing to derive assertion~(iv) from the others).
\end{proof}

\section{A stability result for non-expansive dynamical systems}
\label{section:stability-non-expansive}

The final component to our proof is the following stability result, loosely reminiscent of Banach's fixed point theorem \citep[which can, for instance, be found in][Theorem~3.48 on p.\,95]{AliprantisBorder2006}.

Recall that, for a metric space $(Z,\metric)$, a mapping $G: Z \to Z$ is called \emph{Lipschitz continuous with constant $L$} (where $L \ge 0$ is a real number) if the inequality
\begin{align*}
    \metric(G(z_1), G(z_2)) \le L \metric(z_1, z_2) 
\end{align*}
holds for all $z_1, z_2 \in Z$.
Banach's fixed point theorem assumes Lipschitz continuity with a constant $< 1$.
In the following, we require only Lipschitz continuity with the constant $1$ instead.
As a trade-off, we need to assume a priori existence of a fixed point which is locally asymptotically stable.
Moreover, the assumptions on the metric space are somewhat different than in Banach's fixed point theorem.

We use the following terminology from the theory of dynamical systems.
Let $(Z,\metric)$ be a metric space which is connected, let $G: Z \to Z$, and let $z^* \in Z$ be a fixed point of $G$, 
i.e., $G(z^*) = z^*$. 
The fixed point $z^*$ is called \emph{Lyapunov stable} if for every number $\varepsilon > 0$ 
there exists a number $\delta > 0$ such that, for all $z \in Z$ which satisfy $\metric(z, z^*) < \delta$,
ones has $\metric(G^n(z), G^n(z^*)) < \varepsilon$ for all integers $n \ge 0$.%
\footnote{For the special case of the function $F$ in equation~\eqref{eq:fixed-point-economic} we already recalled the definition of this property in the introduction.}
Moreover, the fixed point $z^*$ of $G$ is called \emph{locally attractive} if there exists a number $\delta > 0$
such that for all $z \in Z$ with $\metric(z,z^*) < \delta$ one has $G^n(z) \to z^*$ as $n \to \infty$.
Finally, $z^*$ is called \emph{locally asymptotically stable} if it is both Lyapunov stable and locally attractive.

Then for Lipschitz continuous functions with Lipschitz constant $1$, the following result expands the local asymptotic stability to a global attractivity property.

\begin{theorem}
    \label{thm:glob.al-stability-via-local-stability}
	Let $(Z,\metric)$ be a metric space which is connected,%
	\footnote{A metric space is called \emph{connected} if it cannot be written as the union of two non-empty open disjoint subsets.}
	let $G: Z \to Z$ be a Lipschitz continuous function with Lipschitz constant $1$, and let $z^* \in Z$ be a fixed point of $G$ which is locally asymptotically stable.
	
	Then $z^*$ is globally attractive, i.e., for each $z \in Z$ the sequence $G^n(z)$ converges to $z^*$ as $n \to \infty$. 
	In particular, $z^*$ is the only fixed point of $G$.
\end{theorem}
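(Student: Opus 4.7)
The plan is to run the classical open/closed dichotomy on the basin of attraction
\[
    B := \{ z \in Z : G^n(z) \to z^* \text{ as } n \to \infty \}.
\]
Since $Z$ is connected and $B$ is non-empty (it contains $z^*$), showing that $B$ is both open and closed in $Z$ will force $B = Z$, i.e.\ global attractivity. The uniqueness clause then falls out immediately: any other fixed point $\tilde z$ would satisfy $G^n(\tilde z) = \tilde z$ for all $n$, so that constant sequence would have to converge to $z^*$, forcing $\tilde z = z^*$.

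First I would verify openness of $B$ by combining local attractivity of $z^*$ with the fact that every iterate $G^{n_0}$ is also $1$-Lipschitz. Fix $z \in B$ and let $\delta > 0$ be such that the $\delta$-ball around $z^*$ lies in $B$ (this is the local attractivity assumption). Since $G^n(z) \to z^*$, pick $n_0$ with $\metric(G^{n_0}(z), z^*) < \delta/2$; then for any $y$ with $\metric(y, z) < \delta/2$, non-expansiveness of $G^{n_0}$ gives
\[
    \metric(G^{n_0}(y), z^*) \le \metric(G^{n_0}(y), G^{n_0}(z)) + \metric(G^{n_0}(z), z^*) < \delta,
\]
so that $G^{n_0}(y)$ already lies in the local basin and hence $y \in B$. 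Thus $B$ contains a ball around each of its points.

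Next I would verify closedness, which is where non-expansiveness is used directly rather than through an iterate. Suppose $z_k \to z$ with $z_k \in B$, and let $\varepsilon > 0$. First pick $k$ with $\metric(z, z_k) < \varepsilon/2$; then, since $z_k \in B$, pick $N$ so that $\metric(G^n(z_k), z^*) < \varepsilon/2$ for every $n \ge N$. For all such $n$,
\[
    \metric(G^n(z), z^*) \le \metric(G^n(z), G^n(z_k)) + \metric(G^n(z_k), z^*) \le \metric(z, z_k) + \metric(G^n(z_k), z^*) < \varepsilon,
\]
which shows $G^n(z) \to z^*$, i.e.\ $z \in B$.

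I do not anticipate any serious obstacle beyond keeping the order of the quantifiers straight in the two estimates. The essential role of non-expansiveness is visible in both halves: without it, orbits starting near a point whose orbit converges could drift away from the reference orbit too rapidly for the above triangle inequalities to close. As a side observation, Lyapunov stability of $z^*$, although part of the hypothesis, is not actually invoked in this plan: local attractivity, $1$-Lipschitz continuity, and connectedness of $Z$ already suffice.
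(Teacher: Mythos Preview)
Your proposal is correct and follows essentially the same route as the paper: define the basin of attraction $B$, show it is open and closed, and conclude $B=Z$ by connectedness. Your closedness argument is verbatim the paper's; for openness the paper simply asserts that local asymptotic stability ``easily'' yields openness of $B$, whereas you spell out the argument via non-expansiveness of $G^{n_0}$ (mere continuity of $G^{n_0}$ would already suffice there, but your version is fine). Your side remark that Lyapunov stability is not actually used is also correct---indeed $1$-Lipschitz continuity already implies it.
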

\begin{proof}
	Let $B$ denote the \emph{basin of attraction} of the fixed point $z^*$, 
	i.e., the set of all $z \in Z$ for which we have $G^n(z) \to z^*$ as $n \to \infty$ (note that $B$ is non-empty as $z^* \in B$). 
	We have to show that $B = Z$.
	
	Since $z^*$ is locally asymptotically stable, it easily follows that the set $B$ is open in $Z$.
	We now show that $B$ is also closed; since $Z$ is connected, this immediately implies $B = Z$ then \citep[see for instance][Theorem~(3.20)]{Armstrong1983}.
	So let $(z_k)_{k \in \bbN}$ be a sequence in $B$ which converges to a point $z \in Z$.
	In order to show that $G^n(z)$ converges to $z^*$ as $n \to \infty$, let $\varepsilon > 0$.
	
	There exists an index $k_0 \in \bbN$ such that $z_{k_0}$ is closer than $\varepsilon/2$ to $z$. 
	Since $z_{k_0}$ is located in the basin of attraction $B$, there exists $n_0 \in \bbN$ such that, for all $n \ge n_0$, the element $G^n(z_{k_0})$ is closer than $\varepsilon/2$ to $z^*$. 
	But this implies, also for all $n \ge n_0$,
	\begin{align*}
		\metric(G^n(z), z^*) 
		\le 
		\underbrace{\metric(G^n(z), G^n(z_{k_0}))}_{\le \metric(z, z_{k_0})} 
		+ 
		\metric(G^n(z_{k_0}), z^*)
		< 
		\frac{\varepsilon}{2} + \frac{\varepsilon}{2}
		= 
		\varepsilon.
	\end{align*}
	So we proved that, indeed, $G^n(z) \to z^*$ as $n \to \infty$, i.e., $z \in B$.
	Hence, $B$ is closed as claimed.
\end{proof}

\section{Proof of the mathematical main result}
\label{sec:proof-of-main-result}

For the proof of our main result, the following concept will be very useful. 
If $v \in \bbR^N_{++}$, then the \emph{gauge norm} with respect to $v$ is the norm $\norm{\argument}_v$ on the space $\bbR^N$ which is given by 
\begin{align*}
    \norm{z}_v 
    = 
    \min \{c \ge 0: \; \modulus{z} \le c v \}
    = 
    \max_{j = 1, \dots, N} \frac{\modulus{z_j}}{v_j}
\end{align*}
for all $z \in \bbR^N$.
Since all norms on $\bbR^N$ are equivalent,
\footnote{Two norms $\norm{\cdot}$ and $\normiii{\cdot }$ on some vector space $V$ are called \emph{equivalent} if there exist strictly positive constants $c,C > 0$ such that $c\norm{y} \le \normiii{y}  \le C \norm{y}$ for all $y \in V$.}
it suffices to prove all convergence results with respect to this norm for some vector $v \in \bbR^N_{++}$ of our choice. 
One key aspect of our proof is that we do not consider thereby a universal vector $v$, %
but that we choose a vector $v$ which is appropriate for the given function $F$. 
\bigskip 

From now on, let the assumptions of the main theorem be satisfied, i.e.\ let $u \in \bbR^N$ with at least one entry different from $0$ and assume that $F$ scales with exponent $u$, that the monotonicity behavior of $F$ is consistent with $u$, that $F$ connects all variables, and that $F$ exhibits self-interaction; also assume that the fixed point equation~\eqref{eq:fixed-point-economic} has a solution $x^* \in \bbR^N_{++}$.

As in Corollary~\ref{cor:signs-vs-spectral-radius},
we choose $v := \modulus{u}$.
It follows from Corollary~\ref{cor:signs-vs-spectral-radius} that all entries of $u$ are distinct from $0$, i.e., $v \in \bbR^N_{++}$.
As indicated above, we will work with the gauge norm $\norm{\argument}_v$ on $\bbR^N$.

We need the following lemma:

\begin{lemma}
    \label{lem:lipschitz-estimate}
    We have
    \begin{align*}
        \norm{G(z) - G(\tilde z)}_v \le \norm{z - \tilde z}_v
    \end{align*}
    for all $z, \tilde z \in \bbR^N$, i.e., the function $G$ is Lipschitz continuous with Lipschitz constant $1$ with respect to the Gauge norm $\norm{\argument}_v$ on $\bbR^N$.%
    \footnote{More precisely speaking, the function $G$ is Lipschitz continuous with constant $1$ with respect to the metric $\metric$ on $\bbR^N$ that is induced by the Gauge norm $\norm{\argument}_v$ -- i.e., the metric given by $\metric(z,\tilde z) := \norm{z-\tilde z}_v$ for all $z,\tilde z \in \bbR^N$.}
\end{lemma}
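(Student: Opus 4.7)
The plan is to estimate $G(z) - G(\tilde z)$ componentwise via the fundamental theorem of calculus, then exploit the fact that $v = \modulus{u}$ is a Perron eigenvector of $\modulus{DG(\cdot)}$ with eigenvalue $1$, which is precisely the content of Corollary~\ref{cor:signs-vs-spectral-radius}(ii). Since the gauge norm $\norm{\argument}_v$ is defined so that $\norm{w}_v$ is the smallest $c \ge 0$ with $\modulus{w} \le c v$, this Perron eigenvector identity is exactly the estimate that will make the Lipschitz bound work with constant $1$.

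In more detail, I would fix $z, \tilde z \in \bbR^N$ and parametrize the straight line between them by $\gamma(t) := \tilde z + t(z - \tilde z)$ for $t \in [0,1]$. Applying the fundamental theorem of calculus componentwise to $G \circ \gamma$ gives, for every index $j$,
\begin{align*}
    G_j(z) - G_j(\tilde z) = \int_0^1 \sum_{k=1}^N \frac{\partial G_j(\gamma(t))}{\partial z_k} (z_k - \tilde z_k) \dxInt t.
\end{align*}
Taking absolute values, pulling them inside the integral and the sum via the triangle inequality, and bounding $\modulus{z_k - \tilde z_k} \le \norm{z - \tilde z}_v\, v_k$ by the very definition of the gauge norm, yields
\begin{align*}
    \modulus{G_j(z) - G_j(\tilde z)} \le \norm{z - \tilde z}_v \int_0^1 \sum_{k=1}^N \modulus{\tfrac{\partial G_j(\gamma(t))}{\partial z_k}} v_k \dxInt t = \norm{z - \tilde z}_v \int_0^1 \bigl( \modulus{DG(\gamma(t))} v \bigr)_j \dxInt t.
\end{align*}

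The decisive step is now to invoke Corollary~\ref{cor:signs-vs-spectral-radius}: its assertion~(ii) tells us that $\modulus{DG(y)} v = v$ for every $y \in \bbR^N$, since the hypotheses of the main theorem include both scaling with exponent $u$ and monotonicity consistent with $u$, and since $v = \modulus{u}$. Hence the integrand above equals $v_j$ at every $t$, and we obtain $\modulus{G_j(z) - G_j(\tilde z)} \le \norm{z - \tilde z}_v\, v_j$. Dividing by $v_j > 0$ and taking the maximum over $j$ gives $\norm{G(z) - G(\tilde z)}_v \le \norm{z - \tilde z}_v$.

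There is no real obstacle here beyond correctly identifying the right norm: the only nontrivial ingredient is the Perron identity $\modulus{DG(\cdot)} v = v$, which is why the gauge norm adapted to $v = \modulus{u}$ (rather than any universal choice such as $\ell^\infty$) is precisely what turns the naive triangle-inequality bound into a Lipschitz estimate with constant exactly $1$. The rest is bookkeeping with the fundamental theorem of calculus.
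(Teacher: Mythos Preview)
Your proof is correct and is essentially the same as the paper's: both parametrize the segment from $\tilde z$ to $z$, apply the fundamental theorem of calculus, and use the identity $\modulus{DG(\cdot)}\,v = v$ from Corollary~\ref{cor:signs-vs-spectral-radius}(ii) together with $\modulus{z-\tilde z} \le \norm{z-\tilde z}_v\, v$ to obtain $\modulus{G(z)-G(\tilde z)} \le \norm{z-\tilde z}_v\, v$. The only cosmetic difference is that you write the estimate componentwise whereas the paper writes it in vector form.
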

\begin{proof}
    According to Corollary~\ref{cor:signs-vs-spectral-radius}(ii), we have $\modulus{DG(z)}v = v$ for all $z \in \bbR^N$.
    Now, we argue similarly as in the proof of Proposition~\ref{prop:reformulation-G}(a):
    
    Fix $z, \tilde z \in \bbR^N$ and let $\gamma: [0,1] \to \bbR^N$ by the straight line which runs from $\tilde z$ to $z$, i.e., $\gamma(t) = \tilde z + t(z-\tilde z)$ for all $t \in [0,1]$.
    Then the fundamental theorem of calculus implies that
    \begin{align*}
        G(z) - G(\tilde z) 
        & =
        G(\gamma(1)) - G(\gamma(0))
        = 
        \int_0^1 \frac{\dx}{\dx t} G(\gamma(t)) \dxInt t
        \\
        & =
        \int_0^1 DG\big(\gamma(t)\big) \dot \gamma(t) \dxInt t
        = 
        \int_0^1 DG\big(\gamma(t)\big) (z - \tilde z) \dxInt t.
    \end{align*}
    Thus,
    \begin{align*}
        \modulus{G(z) - G(\tilde z)}
        & \le 
        \int_0^1 \modulus{DG\big(\gamma(t)\big)} \norm{z - \tilde z}_v \, v \dxInt t
        =
        \norm{z - \tilde z}_v \int_0^1 v \dxInt t
        =
        \norm{z - \tilde z}_v \, v
    \end{align*}
    (where we used $\modulus{z-\tilde z} \le \norm{z-\tilde z}_v v$ for the inequality at the beginning).
    This proves that
    \begin{align*}
        \norm{G(z) - G(\tilde z)}_v \le \norm{z - \tilde z}_v,
    \end{align*}
    as claimed.
\end{proof}

Now we can finally prove our main result.

\begin{proof}[Proof of the main theorem]
    We note that the vector $z^* := \clog(x^*) \in \bbR^N$ is a solution to the fixed point equation~\eqref{eq:fixed-point-log}.
    
    (a)
    To show the claimed up-to-scale uniqueness, let $\tilde z^* \in \bbR^N$ be another solution of~\eqref{eq:fixed-point-log} which is distinct from $z^*$.
    Let $\gamma: [0,1] \to \bbR^N$ be the straight line that runs from $\tilde z^*$ to $z^*$.
    Then, once again by the fundamental theorem of calculus\footnote{The usage of the fundamental theorem of calculus here is loosely reminiscent of the usage of the mean value theorem in the proof of \cite[Theorem~1]{AllenArkolakisLi2022}.
    We point out that those two important results from calculus are related in the sense that the mean value theorem for continuously differentiable functions can immediately be derived from the fundamental theorem of calculus.
    However, while the mean value theorem is only true for functions which map from an interval to $\bbR$, the fundamental theorem of calculus is also true for functions which map from an interval to $\bbR^N$, and this is a considerable advantage for our argument.}
    \begin{align*}
        z^* - \tilde z^* 
        =
        G(z^*) - G(\tilde z^*) 
        =
        \int_0^1 DG\big(\gamma(t)\big) \dxInt t \; (z^* - \tilde z^*),
    \end{align*}
    so $z^* - \tilde z^*$ is an eigenvector of the matrix $A := \int_0^1 DG\big(\gamma(t)\big) \dxInt t$ for the eigenvalue $1$.
    Moreover, we note that $u$ is clearly also an eigenvector of this matrix for the eigenvalue $1$.
    
    At the same time, $A$ is dominated by the matrix $B := \int_0^1 \modulus{DG\big(\gamma(t)\big)} \dxInt t$ in the sense that $\modulus{A} \leq B$.
    Since each of the matrices $\modulus{DG\big(\gamma(t)\big)}$ is irreducible (Proposition~\ref{prop:reformulation-G}(c)) and has the eigenvector $v$ (Corollary~\ref{cor:signs-vs-spectral-radius}), the matrix $B$ is also irreducible and has the eigenvector $v$.
    Thus it follows from Theorem~\ref{thm:domination-of-matrices} that the eigenspace $\ker(1-A)$ is one-dimensional, so $z^* - \tilde z^*$ is a multiple of $u$.
    This proves that all solutions are of the claimed form.
    
    On the other hand, all vectors of the form $z^* + \lambda u$ are indeed solutions of the fixed point equation~\eqref{eq:fixed-point-log} due to Proposition~\ref{prop:reformulation-G}(a).
    
    (b)
    Assume now that $F$ also exhibits self-interaction.
    We continue to use the notation introduced in the proof of~(a).
    We will prove~(b) by means of a quotient space argument.
    Let $U := \{\lambda u: \, \lambda \in \bbR\}$ denote the linear span of the vector $u$ (i.e., $U$ is a one-dimensional vector subspace of $\bbR^N$).
    We endow the quotient space $\bbR^N/U$ with the quotient norm induced by the norm $\norm{\argument}_v$ on $\bbR^N$.
    We denote the quotient norm by $\norm{\argument}_/$; it is defined as
    \begin{align*}
        \norm{[z]}_/ := \min \big\{ \|\tilde z\|_v: \, \tilde z \in [z] \big\}
    \end{align*}
    for each equivalence class $[z] \in \bbR^N/U$.
    
    Now we define a mapping $G: \bbR^N / U \to \bbR^N / U$ as follows: for each equivalence class $[z] \in \bbR^N/U$ we set
    \begin{align*}
        G_/([z]) = [G(z)].
    \end{align*}
    This mapping is well-defined (i.e., $G_/([z])$ does not depend on the choice of the representative $z$ of the equivalence class $[z]$) due to Proposition~\ref{prop:reformulation-G}(a)(ii). Indeed, if $z$ and $z^*$ belong to the same equivalence class, equivalently $z-z^* \in U$, then $z=\lambda u + z^*$ for some $\lambda \in \bbR$. This implies that
    \begin{align*}
        G(z) = G(\lambda u + z^*) = \lambda u + G(z^*),
    \end{align*}
    i.e., $G(z)-G(z^*)$ belongs to $U$.
    
    The point $[z^*] \in \bbR^N / U$ is obviously a fixed point of $G_/$. We will now show that, for every $[z] \in \bbR^N/U$, the iterates $G_/^n([z])$ converge to $[z^*]$ (in $\bbR^N/U$) as $n \to \infty$. By Proposition~\ref{prop:reformulation-G}(a)(iii), $u$ is a fixed vector of the matrix $DG(z)$ for each $z \in \bbR^N$ and hence, the linear mapping $DG(z)$ on $\bbR^N$ induces a linear mapping $DG(z)_/$ on the quotient space $\bbR^N / U$.
    A straightforward computation now shows that the mapping $G_/$ is differentiable and that, for every $z \in \bbR^N$, the derivative of $G_/$ at $[z]$ is equal to $DG(z)_/$.
    Moreover, the following holds for each $z \in \bbR^N$:
    According to Corollary~\ref{cor:signs-vs-spectral-radius} the matrix $\modulus{DG(z)}$ has spectral radius $1$ and the eigenspace of $DG(z)$ for the eigenvalue $1$ is equal to $U$.
    In addition, since the irreducible matrix $\modulus{DG(z)}$ has a non-zero diagonal entry (this is the only point where we use the assumption that $F$ exhibits self-interaction), it is primitive by Proposition~\ref{prop:primitivity}.
    Thus, it follows from Corollary~\ref{cor:dom-primitive} that $DG(z)$ has no eigenvalues on the unit circle except for $1$, and it follows from the similarity assertion in Theorem~\ref{thm:domination-of-matrices} that the eigenvalue $1$ of $DG(z)$ is algebraically simple.
    We can thus conclude that the linear mapping $DG(z)_/$ has no eigenvalues on the unit circle and hence has spectral radius $< 1$.%
    \footnote{More precisely, this argument works as follows: 
    by considering the Jordan normal form of $DG(z)$ and using that the Jordan block for the eigenvalue $1$ in this normal form has size $1 \times 1$ (as the eigenvalue $1$ of $DG(z)$ is algebraically simple), one sees that the spectrum of $DG(z)_/$ is precisely the spectrum of $DG(z)$ except for the number $1$, which is not an eigenvalue of $DG(z)_/$.
    Since $1$ is the only eigenvalue of $DG(z)$ which does not have modulus $<1$, we thus conclude that all eigenvalues of $DG(z)_/$ have modulus $<1$.}
    
    This implies, by the principle of linearized stability \cite[Theorem~3.3.52]{HinrichsenPritchard2005}, that the equilibrium $[z^*]$ of $G_/$ is locally asymptotically stable.
    Moreover, from the Lipschitz continuity of $G$ with Lipschitz constant $1$ one can easily derive that $G_/$ is also Lipschitz continuous with Lipschitz constant $1$. 
    Hence, it follows from Theorem~\ref{thm:glob.al-stability-via-local-stability} that indeed $G_/^n([z]) \rightarrow [z^*]$ as $n \rightarrow \infty$ for every $z \in \bbR^N$.
    
    Finally, we need to prove that this implies assertion~(b) for the map $G$ on $\bbR^N$.
    The Lyapunov stability of all fixed points of $G$ (and hence of $F$)  follows immediately from the Lipschitz continuity of $G$ (Lemma~\ref{lem:lipschitz-estimate}).
    To show the claimed convergence, fix $z \in \bbR^N$.
    Since $G_/^n([z])$ converges to $[z^*]$ as $n \to \infty$, there exists a sequence $(\lambda_n)_{n \in \mathbb{N}}$ in $\bbR$ such that $G^n(z) - \lambda_n u \to z^*$ as $n \to \infty$.
    
    We are now going to show that this implies that $(G^n(z))_{n \in \mathbb{N}}$ is a Cauchy sequence%
    \footnote{By definition of the notion \emph{Cauchy sequence} this means we have to show that for every $\varepsilon > 0$ there exists an index $n_0$ such that $\norm{G^{n_2}(z) - G^{n_1}(z)}_v < 2\varepsilon$ for all $n_1, n_2 \ge n_0$.}
    in $\bbR^N$, so let $\varepsilon > 0$.
    There exists an index $n_0$ such that $\norm{z^* + \lambda_{n_0} u - G^{n_0}(z)}_v < \varepsilon$.
	We now use that $z^* + \lambda_n u$ is a fixed point of $G$ for every $n \in \bbN$ and that $G$ is Lipschitz continuous with constant $1$.
	This implies that, for each $n \ge n_0$,
	\begin{align*}
		\norm{z^* + \lambda_{n_0} u - G^n(z)}_v
		& =
		\norm{G^{n-n_0}(z^* + \lambda_{n_0} u) - G^{n-n_0}(G^{n_0}(z))}_v
		\\
		& \le 
		\norm{z^* + \lambda_{n_0} u - G^{n_0}(z)}_v
		<
		\varepsilon.
	\end{align*}
	Thus, $\norm{G^{n_2}(z) - G^{n_1}(z)}_v < 2\varepsilon$ for all $n_1, n_2 \ge n_0$, which proves that $(G^n(z))_{n \in \mathbb{N}}$ is indeed a Cauchy sequence. 
	Since $\bbR^{N}$ is complete (with respect to any norm and thus, in particular, with respect to the norm $\norm{\argument}$), it follows that $(G^n(z))_{n \in \mathbb{N}}$ converges in $\bbR^{N}$. By the continuity of $G$, the limit is clearly a fixed point of $G$.
\end{proof}
\end{appendices}

\end{document}